\documentclass[twocolumn]{autart}    

\usepackage{lbase}
\usepackage{graphicx}          
\usepackage{blindtext}
\usepackage{amssymb}
\usepackage{mathtools}
\usepackage{blox}
\usepackage{eurosym}
\usepackage{bm}
\usepackage{algorithm}
\usepackage[noend]{algorithmic}
\usepackage{upgreek}
\usepackage{color}
\usepackage{siunitx}	
\usepackage{booktabs}
\usepackage{tabularx}						
\usepackage{longtable}	
\usepackage{multirow}	
\usepackage{cases}	
\usepackage{balance}
\usepackage{accents} 
\usepackage{xcolor}
\usepackage[normalem]{ulem}
\usepackage{cite}
\usepackage{type1ec}
\usepackage{mathptmx}
\usepackage{bbold}
\usepackage[hidelinks]{hyperref}
\usepackage{xurl}
\DeclareMathAlphabet{\mathcal}{OMS}{cmsy}{m}{n}

\newcommand{\R}{\mathbb{R}}
\newcommand{\mP}{\mathcal{P}}
\newcommand{\mG}{\mathcal{G}}

\newcommand{\mRid}{\mathcal{R}_0}
\newcommand{\Rid}{R_{0}}
\newcommand{\Arid}{A_{0}}
\newcommand{\Brid}{B_0}
\newcommand{\Crid}{C_0}
\newcommand{\Drid}{D_0}

\newcommand{\uid}{u_0}
\newcommand{\mRdd}{\hat{\mathcal{R}}_0}
\newcommand{\Rdd}{\hat{R}_0}
\newcommand{\Ardd}{\hat{A}_0}
\newcommand{\Brdd}{\hat{B}_0}
\newcommand{\Crdd}{\hat{C}_0}
\newcommand{\Drdd}{\hat{D}_0}
\newcommand{\udd}{\hat{u}_0}
\newcommand{\xrdd}{\hat{z}}
\newcommand{\ndd}{\hat{n}_0}

\newcommand{\Br}{B_{\text{r}}}

\newcommand{\ueq}{u_{1\text{eq}}}

\newcommand{\M}{\mathcal{M}}
\newcommand{\xmd}{\dot{x}_\text{m}}
\newcommand{\xm}{x_\text{m}}
\newcommand{\Am}{A_{\text{m}}}
\newcommand{\Bm}{B_{\text{m}}}
\newcommand{\Cm}{C_{\text{m}}}
\newcommand{\xo}{x_{\text{o}}}
\newcommand{\yo}{y_{\text{o}}}
\newcommand{\yod}{\dot{y}_{\text{o}}}
\newcommand{\eo}{e_{\text{o}}}
\newcommand{\nm}{n_{\text{m}}}
\newcommand{\mMdd}{\hat{\mathcal{M}}}
\newcommand{\Amdd}{\hat{A}_{\text{m}}}
\newcommand{\Bmdd}{\hat{B}_{\text{m}}}
\newcommand{\Cmdd}{\hat{C}_{\text{m}}}
\newcommand{\Budd}{\hat{B}_{\text{u}}}
\newcommand{\xmddd}{\dot{\hat{x}}_{\text{m}}}
\newcommand{\xmdd}{\hat{x}_{\text{m}}}
\newcommand{\xdd}{\hat{x}_\text{o}}
\newcommand{\xddd}{\dot{\hat{x}}_\text{o}}
\newcommand{\ymdd}{\hat{y}_\text{o}}
\newcommand{\ymddd}{\dot{\hat{y}}_\text{o}}
\newcommand{\edd}{\hat{e}_\text{o}}
\newcommand{\Mdd}{\hat{M}(s)}

\newcommand{\urm}{\Check{u}_{\text{o}}}
\newcommand{\sigmaddo}{\hat{\sigma}_{\text{0}}}
\newcommand{\sigmadd}{\hat{\sigma}}
\newcommand{\zetadd}{\hat{\zeta}}
\newcommand{\xa}{x_{\mathrm{a}}}
\newcommand{\xad}{\dot{x}_{\mathrm{a}}}
\newcommand{\Aa}{A_{\mathrm{a}}}
\newcommand{\BaU}{B_{\mathrm{a}}}
\newcommand{\Ca}{C_{\mathrm{a}}}
\newcommand{\BaR}{B_{\mathrm{r}}}

\newcommand{\mL}{\mathcal{L}}
\newcommand{\vecto}{\text{vec}}
\newcommand{\Tr}{\text{tr}}
\newcommand{\jO}{\mathrm{j}\omega}

\let\Theta\varTheta
\newcommand{\tg}{\forall t \geq 0}
\makeatletter
\xdef\@endgadget#1{{\unskip\nobreak\hfil\penalty50\hskip1em\hbox{}\nobreak\hfil#1\parfillskip=0pt\finalhyphendemerits=0\par}}
\newcommand\@Endofsymbol{$\triangledown$}
\newcommand\Endofremark{\@endgadget{\@Endofsymbol}}
\makeatother

\usepackage[many]{tcolorbox}
\usetikzlibrary{calc}
\tcbuselibrary{skins}
\newtheorem{problem}{Problem}
\newtcolorbox{resp}[1][]{%
	enhanced jigsaw,%
	colback=gray!2!white,%
	colframe=gray!90!black,%
	size=small,%
	boxrule=1pt,%
	halign title=flush center,%
	coltitle=black,%
	breakable,%
	drop shadow=black!70!white,%
	attach boxed title to top left={xshift=1cm,yshift=-\tcboxedtitleheight/2,yshifttext=-\tcboxedtitleheight/2},%
	minipage boxed title=3cm,%
	boxed title style={%
		colback=white,%
		size=fbox,%
		boxrule=1pt,%
		boxsep=2pt,%
		underlay={%
			\coordinate (dotA) at ($(interior.west) + (-0.5pt,0)$);
			\coordinate (dotB) at ($(interior.east) + (0.5pt,0)$);
			\begin{scope}[gray!80!black]
				\fill (dotA) circle (2pt);
				\fill (dotB) circle (2pt);
			\end{scope}
		}%
	},%
	#1%
}

\newtheorem{theorem}{Theorem}
\newtheorem{lemma}[theorem]{Lemma}

\newtheorem{proposition}[theorem]{Proposition}

\begin{document}

\begin{frontmatter}

\title{Robust data-driven model-reference control of linear perturbed systems via sliding mode generation} 
\thanks[footnoteinfo]{This paper was not presented at any IFAC 
meeting. Corresponding author G.~Riva, Dipartimento di Elettronica, Informazione
e Bioingegneria, Politecnico di Milano, 20133 Milan, Italy.}

\author[polimi]{Giorgio Riva}\ead{giorgio.riva@polimi.it},    
\author[polimi]{Gian Paolo Incremona}\ead{gianpaolo.incremona@polimi.it},     
\author[polimi]{Simone Formentin}\ead{simone.formentin@polimi.it}, 
\author[unipv]{Antonella Ferrara}\ead{antonella.ferrara@unipv.it}             
\address[polimi]{Dipartimento di Elettronica, Informazione e Bioingegneria, Politecnico di Milano, 20133 Milan, Italy}                                           
\address[unipv]{Dipartimento di Ingegneria Industriale e dell'Informazione, University of Pavia, 27100 Pavia, Italy}             
       
\begin{keyword}                            
Data-based control design; Virtual reference feedback tuning; Direct control; Integral sliding mode control; Model-free control.
\end{keyword}                              

\begin{abstract}
This paper introduces a data-based integral sliding mode control scheme for robustification of model-reference controllers, accommodating generic multivariable linear systems with unknown dynamics and affected by matched disturbances. Specifically, an integral sliding mode control (ISMC) law is recast into a data-based framework relying on an integral sliding variable depending only on the reference model, without the need of modeling the plant. The main strength of the proposed approach is the enforcement of the desired reference model in closed-loop under sliding mode conditions, despite the lack of knowledge of the model dynamics and the presence of the matched disturbances. Moreover, the conditions required to guarantee an integral sliding mode generation and the closed-loop stability are formally analyzed in the paper, remarking the generality of the proposed data-driven integral sliding mode control (DD-ISMC) with respect to the related model-based counterpart. Finally, the main practices for the data-based design of the proposed control scheme are deeply discussed in the paper, and the proposed method is tested in simulation on a benchmark example, and experimentally on a real laboratory setup. Simulation and experimental evidence fully corroborates the theoretical analysis, thus motivating further research in this direction.
\end{abstract}

\end{frontmatter}

\section{Introduction}
In a multitude of today's applicative fields, from biomedicine to economy, and in many engineering domains, from transportation to industrial processes, data  are playing a role of paramount importance. In the specific case of practical control applications, whenever the dynamics of plants cannot be easily captured by mathematical models, measurement data become instrumental to design controllers capable of making the system behave as a reference model. In this context, data-driven control approaches have established themselves for their ability to address this kind of control problems, without resorting to heuristics, supported by a solid theoretical foundation \cite{formentin2014comparison}.

Within this research line, two well-established paradigms have emerged, namely \emph{indirect data-driven control} and \emph{direct data-driven control}. The former consists of two sequential design phases, that are a system identification and then a control synthesis based on the identified model. The latter instead pursues the decision of the control action in a way compatible with the recorded data. Exhaustive surveys on both the methodologies can be found e.g., in \cite{hou2013model}.

Although both approaches have been deeply investigated and elaborated in the literature, limitations become apparent under perturbed settings affecting their outcomes due to inconsistent uncertainty representations. In indirect approaches, indeed, the model identification, under perturbed conditions, can be of limited relevance; in direct methods, instead, addressing this shortcoming by learning the control policy from data can be analytically more complex and computationally burdensome \cite[Sec. 1]{dorfler2022bridging}. Bridging direct and indirect data-driven control \cite{dorfler2022bridging} addresses these problems by trading off the identification and the control decision seeking.

In particular, these findings resonate with data-driven methods based on the \emph{model-reference} paradigm, where the controller tuning problem is formulated as an identification problem based on input and output data collected from the plant and on a target closed-loop behavior, typically represented by a transfer function, dubbed \emph{reference model} \cite{campestrini2017data}. Specifically, despite direct data-driven model-reference methods are preferable in reducing the bias of identified controllers, they lack in the variance trade-off. This occurs especially when model classes are not finely chosen, which is a common situation in practical applications \cite{dorfler2022bridging}.

\subsection{From peculiar robust data-driven control to \\ robustification via sliding mode generation}

In the above mentioned \emph{model-reference} design framework, direct data-driven controller tuning methods aim at configuring a feedback controller directly from input–output data without computing an explicit plant model, in order to match a user defined closed-loop behaviour. Among the classical approaches, virtual reference feedback tuning (VRFT) uses a single batch of open-loop data to solve an offline optimization that matches a prescribed reference model with the actual closed-loop behavior, minimizing a “virtual” error between measured and ideal outputs~\cite{Campi2002,Formentin2012}. Similarly, iterative feedback tuning (IFT) refines controller parameters through repeated closed-loop experiments, estimating gradients of a performance index directly from data trajectories~\cite{Hjalmarsson1998}. Fictitious reference iterative tuning (FRIT) constructs a fictitious reference from one experiment and optimizes controller parameters offline to better match a reference model, which can improve performance in non-ideal dynamics compared to VRFT at the cost of increased computation~\cite{Soma2004}. Correlation-based tuning (CbT) enforces statistical orthogonality conditions between error and filtered signals to achieve model-reference matching in a least-squares sense, often yielding unbiased controller estimates when noise has zero mean~\cite{Karimi2004}.\\
Recent literature has continued this line of work while addressing practical performance and robustness challenges inherent in these methods. For example, auto-tuning of reference models seeks to jointly optimize the reference and controller parameters directly from data to improve achievable closed-loop performance and handle non-minimum-phase dynamics~\cite{Masti2023}, and some recent work proposes novel strategies that enforce closed-loop stability in the presence of noise and measurement limitations~\cite{breschi2021direct,Formentin2023}. Another recent activity extends robust data-driven iterative control strategies to linear systems with bounded disturbances, designing controllers that account for uncertainty directly from data by considering all systems compatible with the collected dataset, highlighting that disturbance presence fundamentally limits identification of a “true” system solely from finite data~\cite{HuLiu2024}. A related paper proposes a robust data-driven control framework tailored for probabilistic systems with aleatoric uncertainty, combining scenario optimization with direct data-driven design to achieve probabilistic stability guarantees across variations in dynamics, emphasizing that robustness to system variations must be explicitly designed rather than automatically ensured by model-free tuning~\cite{VonRohre2025}.\\
A critical issue across all these approaches is their handling of disturbances during operational control (i.e., once the controller is deployed). While many methods can mitigate the effect of noise in the training data (e.g., via instrumental variables or filtering), none of them is taylored to eliminate the effect of unknown disturbances on the closed-loop performance once the controller is fixed: the designed controller, even if tuned to a reference model on clean or carefully filtered data, will still experience residual performance degradation under unmodeled disturbances or plant variations in operation. This is because direct data-driven tuning enforces model matching under the assumption that the closed-loop behavior approximates the desired model, but external disturbances and plant variation act as unmodeled inputs that cannot be fully “learned away” from fixed amounts of finite data and thus influence the realized closed-loop output relative to the reference. This is true also for reformulation of classical methods more oriented to disturbance rejection like \cite{eckhard2018virtual}, as the focus therein is to match some specific sensitivity functions without preserving the reference-to-output matching objective of the original VRFT. Such limitations are intrinsic to model-free strategies that do not explicitly incorporate robust performance criteria or disturbance attenuation objectives in the synthesis.

As a valid alternative to tackle challenges induced by disturbances, sliding mode control (SMC) has gained widespread adoption, owing to its capability to deliver simple and computationally efficient solutions to a broad range of control problems under uncertainty conditions \cite{incremona2019smc}. In classical SMC, suitably defining the so-called sliding variable, the control law is a discontinuous function of time, aimed at steering such a variable to zero in finite time. Whenever a sliding mode is enforced after a finite transient interval, then the resulting state dynamics is proved insensitive to the so-called matched disturbances (see \cite[Ch. 1]{incremona2019smc} for a deeper discussion on classical SMC law). Moreover, SMC laws can be considered \emph{model-independent}, only relying on the knowledge of the bound on the uncertainties.

The integral sliding mode control (ISMC) proposed by \cite{utkin1996ism} can be classified instead as a \emph{model-based} approach. Its control law consists of two components, that is an ideal control law aimed at stabilizing the unperturbed closed-loop system in compliance with prescribed requirements, and a discontinuous law  taking care of the matched disturbance rejection. Specifically, assuming to know the model of the system, the ISMC law is capable of making the closed-loop system dynamics ideally coincide with that enforced via the ideal stabilizing control law  since the initial time instant, hence enhancing the robustness of the controlled system. Taking this perspective, the ISMC \cite{utkin1996ism} can be conceived as a robust model-reference controller. However, the requirement on the knowledge of the model prevents its application to model-unknown systems and its combination with data-driven controllers. 

To address this problem, recently, data-based versions of ISMC have been introduced in the literature. Among \emph{indirect} methods are ISMC
laws that rely on the use of neural-networks. In \cite{sacchi2024nnsmc} deep neural networks are exploited to approximate the model of nonlinear systems where the drift term and the matrix multiplying the input are unknown. Such an approximation is then adopted as model to design the integral sliding variable and the SMC law, as well as the ideal stabilizing law. The sequential model identification, which may not be free from an unknown approximation error, and the application of the ISMC conceived as in \cite{utkin1996ism} can however prevent from making the closed-loop system behave as the desired reference model. An overview on these neural-network based ISMC approaches can be found in \cite{ferrara2024survey}. 

Within the category of \emph{direct} approaches, the most notable is \cite{samari2025ddismc}, where a data-driven integral sliding mode control (DD-ISMC) is proposed for large-scale networks of model-unknown continuous-time nonlinear systems, exploiting two input-state trajectories to design the ideal stabilizing law and the SMC component, proving robust global asymptotic stability of the closed loop network. In this work all the state vector is assumed available, and the input-output data are retrieved in a disturbance-free setting. Moreover, such a DD-ISMC is not formulated as a model-reference control, but a stabilization control problem is addressed without requiring to make the closed loop system behave as a desired reference model. Potentially recasting this problem in a model-reference setting, under the control structure proposed in \cite{utkin1996ism}, however, the disturbance-free assumption for the collected data can prevent, in a real perturbed scenario, from making the closed loop system coincide with the reference model. 

In the same category of direct approaches, other relatively few works have proposed a comprehensive methodological framework, see e.g., \cite{liu2019ddsmc,corradini2022ddsmc,lan2025ddsmc}. However, these contributions adopt classical SMC laws rather than an ISMC method, hence being not suitable to handle model-reference control problems.
Differently from the aforementioned works which do not explicitly employ a reference model concept, \cite{riva2024ddsmc} proposes a data-driven model-reference control based on VRFT approach, combined with an ISMC for single-input single-output (SISO) continuous-time linear systems perturbed by matched disturbances.

Taking this line of research, that of exploring robust model-reference based controllers for perturbed model-unknown systems, we provide  here a comprehensive methodological framework for the design of a model-reference control  made robust through a DD-ISMC, the tuning of which is in turn grounded in data and  closed-loop reference model.

\subsection{Main contributions of the paper}
In this article, we contribute to the literature on robust data-driven model-reference control by proposing a holistic methodology which exploits a DD-ISMC to reject matched disturbances affecting the system and possible residual disturbances deriving from a suboptimal ideal action policy achieved from input-output data. 
Differently from \cite{riva2024ddsmc}, where the technical feasibility of combining ISMC and VRFT approach was investigated in a SISO setting, this paper introduces a deep  and rigorous general theory to design robust data-driven model-reference control schemes via sliding mode generation. 

We focus here on generic multivariable continuous-time linear model-unknown systems with output feedback and perturbed by matched disturbances. 
Then, motivated by the limitations of aforementioned DD-ISMC methods in the presence of disturbances, this paper proposes the design of an integral sliding variable depending exclusively on the reference model and independent of state measurements. Such a choice is proved to be equivalent to the design in \cite{utkin1996ism} when the system model is known or when no disturbances affect the data collection phase. It is shown that the resulting DD-ISMC  law enforces the reference model in closed-loop regardless of the presence of matched disturbances and of disturbances induced by the ideal stabilizing law identified from data.
Required conditions ensuring closed-loop stability in sliding mode are derived, together with a proof of the existence of an integral sliding mode and a characterization of the residual disturbance associated with the data-driven ideal model-reference control law.

Moreover, practical tuning guidelines for the proposed approach are provided by exploiting a continuous-time multivariable VRFT methodology, along with conditions for tuning the discontinuous component of the control law. In particular, by leveraging the Petersen’s lemma reformulated within the proposed framework, the gain and the matrix required to construct the discontinuous component of the proposed DD-ISMC guaranteeing the existence of a sliding mode from the initial time instant are synthesized. The theoretical results are validated through both simulations and experimental tests on a numerical benchmark and on a real-setup to further assess its feasibility in practice.

Finally, it is worth highlighting that the contribution of this work is twofold. First, it provides a rigorous and readily implementable solution to the open problem of robust data-driven model-reference control. Second, it extends the classical ISMC paradigm in its model-reference formulation to the case of model-unknown systems and in the output-feedback fashion.

\subsection{Outline of the paper}
The remainder of the paper is structured as follows. Section \ref{sec:pf} introduces the main notation, defines the considered class of systems and its assumptions, recalls some preliminaries on ISMC, and states the control problem. Section \ref{sec:ddismc} presents the proposed DD-ISMC scheme, and provides the main theoretical results. Section \ref{sec:practices} discusses how to tune the DD-ISMC law, and Section \ref{sec:num_exp_results} is devoted to the numerical and experimental assessment. The paper ends with some conclusions in Section \ref{sec:conclusions}, and an in-depth treatment of the continuous-time multivariable VRFT algorithm in Appendix \ref{app:vrft}.

\section{Problem Formulation}\label{sec:pf}
To move towards the proposed theory of robust model-reference control via sliding mode generation, we need first to introduce the
formal setup and the control problem in this section, after defining the notation adopted throughout the paper.

\subsection{Notation}

\emph{Matrices.}
The transpose of a matrix $A$ is denoted by $A'$, its trace as $\Tr(A)$, whereas $\bar{\lambda}(A)$ and $\underline{\lambda}(A)$ indicate its maximum and minimum eigenvalue, respectively. The norm of a matrix $A$ is $\norm{A}=\sqrt{\overline{\lambda}\left(A'A\right)}$. Given two matrices $A\in\R^{m\times n}$ and $B\in\R^{p\times q}$, their Kronecker product is denoted as $A\otimes B\in\R^{pm\times qn}$, whereas the vectorization operator of $A\in\R^{m\times n}$ is defined as $\vecto(A)\in\R^{mn}$. The matrix $I_k$ denotes the $k\times k$ identity matrix. Given a real matrix $A\in \R^{n\times n}$, it is called positive (negative) definite if, for all non-zero vectors $x\in \R^n$, $x'A x > 0$ ($x'A x < 0$), and it is denoted as $A>0$ ($A<0$). Hence, we do not restrict the definition of positive (negative) definite matrices to symmetric ones. Nevertheless, it holds that a real matrix $A\in \R^{n\times n}$ is positive (negative) definite if and only if its symmetric part $0.5(A+A') >0$ ($0.5(A+A')<0$). 

\emph{Signals, norms and LTI systems.} We adopt lowercase letters to denote signals, and uppercase letters to denote both matrices and LTI systems, i.e. $\mG:u\mapsto y$ is the linear mappings between input and output signals $y=\mG u$. Signals $x(t)$ in Laplace domains and the transfer function of an LTI system $\mG$ are normally indicated as $X(s)$ and $G(s)$, respectively. By an abuse of notation, which is however common in the literature, the input-output map is also indicated as $y(t)=G(s)u(t)$. The frequency response associated with $G(s)$ is denoted by $G(\jO)$.  
A signal $x$ belongs to $\mL_2$ iff its squared 2-norm, defined as $\lVert x\rVert_2=\int^{+\infty}_{-\infty} x'(t)x(t) \text{d} t$, is finite, whereas the 2-norm a vector $w$ is denoted as $\lVert w\rVert$. 
Given an LTI system with transfer function $G(s)$, its squared 2-norm is computed as $\lVert G(s)\rVert_2^2  = \int^{\infty}_{-\infty} \Tr\left( G(-\jO)' G(\jO)\right) \text{d}\omega$.
 Given a signal $x(t)\in\mathcal{L}_2$, if the derivative (in the weak sense of the theory of distributions, see e.g., \cite[Chapter 1]{adams2003sobolev}) of order
 $i$ of $x(t)$, written $x^{(i)}(t)$, is in $\mathcal{L}_2$ for all $i\leq p$, then $x(t)$ is said to
 belong to $\mathcal{H}^p$, the Sobolev space of order $p$ \cite{adams2003sobolev}. 
Given a signal $x(t):\R\mapsto\R$, the function $\mathrm{step}(x)$ is defined such that $\mathrm{step}(x)=1$ if $x>0$, and $\mathrm{step}(x)=0$ if $x\leq0$.
Given a signal $x(t):\R\mapsto\R^m$, we denote by $x_{[0,(T-1)\tau]}$, where
$\tau \in \R_{>0}$, $T\in\mathbb{N}_{\geq 1}$, the restriction in vectorized form of $x$ to the
sampled interval $[0,(T-1)\tau] \cap \mathbb{Z}$, namely $x_{[0,(T-1)\tau]} \coloneq \smat{
x(0)' & x(\tau)' & \dots & x((T-1)\tau)'}'$. Letting $i\in\mathbb{N}$ with $i\in[1,T]$, we denote also the Hankel matrix associated to $x_{[0,(T-1)\tau]}$ as
$$
X_{\{0,i,(T-i+1)\tau\}} \coloneqq 
\begin{bmatrix}
    x(0)          & x(\tau)  & \dots   & x((T-i)\tau) \\
    x(\tau)       & x(2\tau) & \dots   & x((T-i+1)\tau) \\
    \vdots        & \vdots   &  \ddots & \vdots \\
    x((i-1)\tau)  & x(i\tau) & \dots   & x((T-1)\tau)
\end{bmatrix}.
$$

\subsection{Nominal dynamics}\label{sec:nom_dyn}
Consider a strictly proper continuous-time linear time-invariant  (LTI) system $\mP:u\mapsto y$  with control input $u\in\R^m$ and a measured output $y \in\R^m$. The state realization of the plant dynamics is described by 
\begin{equation}\label{eq:ssNom}
\mP:
\begin{cases}
\dot{x}(t) = Ax(t) + Bu(t)\\
y(t) = Cx(t),
\end{cases}
\end{equation}
with state $x\in\R^n$ and transfer function given by 
\[
P(s) \coloneq C(sI-A)^{-1}B\,,
\]
where the tuple $(A,B,C)$ is unknown.
Let the measured output be $y\in\R^{m}$, 
meaning that the regulated channel $u\mapsto y$ is neither under-actuated, which is necessary for any regulator problem, nor has redundancies.
Then, let the error signal 
\begin{equation}\label{eq:e}
e(t) \coloneq r(t)-y(t)\in\R^m\,,
\end{equation}
where $r(t)\in\R^m$ is the reference signal. 

Now, consider a reference model $\M: r\mapsto \yo$ corresponding to the \emph{ideal} evolution of the closed-loop system. Moreover, we restrict the family of reference models to those described by transfer function $M(s)$.
Assume that there exists an \emph{ideal} control action $u\coloneqq \uid$ such that the output of system \eqref{eq:ssNom} under the control $\uid$ corresponds to the output of the reference model $\yo$, with $\yo(0)=C\xo(0)$, where $\xo$ as the state of the system under the control action $\uid$.

Assume also that there exists an \emph{ideal controller} $\mRid:e\mapsto \urm$ with transfer function $\Rid(s)$ and state-space realization given by the tuple $\left(\Arid,\Brid,\Crid,\Drid\right)$, capable of making the closed-loop transfer function behave as $M(s)$  of the reference model $\M$. Note that $\urm$ does not necessarily coincide with $\uid$, because of the non-uniqueness of the state realization of the reference model $\M$, as will be discussed in Section \ref{sec:mr_control} in the case of unknown model of the system. Specifically, $\urm$ coincides with $\uid$ in case of zero initial conditions.

\subsection{Perturbed dynamics and problem statement}
The real systems like \eqref{eq:ssNom} are, however, typically affected by unavoidable modeling uncertainties and external disturbances, and this consideration can be formulated by introducing the perturbed system $\mP_d:u\mapsto y$ as
\begin{equation}\label{eq:ssPer}
\mP_d:
\begin{cases}
\dot{x}(t) = Ax(t) + Bu(t)+Bd(t)\\
y(t) = Cx(t),\\
\end{cases}
\end{equation}
where $d\in\R^m$ is the whole perturbation of \emph{matched} type. Therefore,  we address the following problem.

\begin{resp}
\begin{problem}\label{pb:main}
Design a control law $u$ such that the solution of \emph{perturbed} and \emph{unknown} system \eqref{eq:ssPer} fulfills $y(t)\equiv \yo(t)$ and $x(t)\equiv \xo(t)$ for any $t\geq 0$, starting from the initial conditions $x(0)= \xo(0)$. 
\end{problem}
\end{resp}
What we really need are the following assumptions:
\begin{assumptions}
\item\label{ass:d} $\norm{d(t)}\leq \bar{d}$,  $\bar{d}=\sup_{t\geq 0}\{\norm{d(t)}\}$ known, $d\in\mathcal{H}^p$ of suitable order $p$. 
\item\label{ass:cb} $\rank(CB)=m$.
\end{assumptions}

Assumption \Ass{\ref{ass:d}} does not imply any restriction on the class of disturbances, which are commonly bounded for physical reasons, and their bounds can be characterized through a careful analysis of the considered system, relying on, e.g., data analysis or physical insights.
Assumption \Ass{\ref{ass:cb}} says that, although $C$ and $B$ are unknown, there exists $(CB)^{-1}$, which is a sufficient condition for the existence of $\uid$. Moreover, such a condition  simplifies the design of a sensible reference model, as stated in the following remark.
\\

\begin{remark}[Reference model design and $R_0(s)$ existence]
    Note that \Ass{\ref{ass:cb}} implies that the system has a strict relative degree equal to one (see \cite{mueller2009normal} for further details), which in turn guarantees that for any strictly-proper reference model $M(s)$  (that is with a vector relative degree $\left(\nu_1,\dots,\nu_m\right)$ such that $\nu _i\geq 1$, $\forall i = 1,\dots,m$) a causal controller $R_0(s)$ exists, and it can be computed as $R_0(s) = (P^{-1}(s)M(s))(I-M(s))^{-1}$. For example, considering a diagonal reference model $M(s)$, this mild requirement implies that each SISO transfer function in the main diagonal should have a relative degree $\nu_i \geq 1$. 
    \Endofremark
\end{remark}

\subsection{Integral sliding mode control}\label{sec:ismcc}
Under assumptions \Ass{\ref{ass:d},\ref{ass:cb}}, a possible strategy to address the previous problem is given by the so-called ISMC \cite{utkin1996ism}. The main idea of an ISMC is to add  to the ideal control action a discontinuous control component depending on a \emph{switching function} obtained by summing a so-called sliding variable (typically equal to a linear combination of the system states) and a so-called \emph{transient function} given by an integral term which depends on the nominal dynamics of the system and on the initial conditions. 

Specifically, given the ideal control action $\uid$, the whole control input becomes 
\begin{equation}\label{eq:u}
u(t) \coloneqq \uid(t)+u_1(t)\,,
\end{equation}
where $u_1$ is designed to be discontinuous aimed at rejecting the perturbation $d$. Assuming that the system error $e$ in \eqref{eq:e} can be regarded as the so-called sliding variable $\sigma_0$, i.e., $\sigma_0=e$, the switching function is designed as
\begin{equation}
\sigma \coloneqq \sigma_0 + \zeta\in\R^m\,,
\end{equation}
where $\zeta$ induces the integral term and represents the transient function. 

To design an integral feedback such that the so-called equivalent control \cite[Sec. 2.2]{utkin1992smc}, namely $\ueq$ (achieved by solving $\dot{\sigma}=0$ with respect to the control input) compensates the disturbance, i.e., $\ueq(t)=-d(t),\,\forall\,t\geq 0$, the transient function is determined as
\begin{equation}\label{eq:z}
\dot{\zeta} (t)= CAx(t)+CB\uid(t)-\dot{r}(t),
\end{equation}
with initial condition $\zeta(0)=-\sigma_0(0)$, so that $\sigma(0)=0$, that is a sliding mode occurs starting from the initial time instant. By virtue of assumption \Ass{\ref{ass:cb}}, such a sliding mode can be enforced by applying the control law
\begin{equation}
\label{eq:u1}
u_1(t)\coloneqq \rho(CB)^{-1}\frac{\sigma(t)}{\norm{\sigma(t)}},
\end{equation}
that is the so-called \emph{unit vector} approach, which is a convenient control structure for multivariable systems \cite[Sec. 3.6]{edwards1998smc}. Making reference to \cite[Sec. 2]{utkin1996ism}, under assumptions \Ass{\ref{ass:d},\ref{ass:cb}}, by using Lyapunov arguments, the sliding mode is enforced if the gain $\rho>\sqrt{\overline{\lambda}\left((CB)'CB\right)}\bar{d}$, that is the gain should be large enough to dominate the worst realization of the disturbance.

However, the classical ISMC is based on the perfect knowledge of the nominal model $(A,B,C)$ in \eqref{eq:ssNom} of the system, and on the measurement of the state vector. Specifically, this information is required to design the ideal control law $\uid$, which determines the transient function dynamics in \eqref{eq:z}. The approach is therefore not readily extendable to include systems with unknown models, like those used in this work. 
In the following, we devise a novel procedure to design an ISMC capable to solve Problem \ref{pb:main}, relying on a data-driven model-reference ideal control, while maintaining the same robustness property of the original ISMC in \cite{utkin1996ism}.

\section{The Proposed Model-Reference-Based ISMC}\label{sec:ddismc}
In this paper, to solve Problem \ref{pb:main}, we propose a data-driven ISMC (referred as DD-ISMC) relying on a
model-reference control, based on the measurement retrieved from the plant, as indicated in the scheme in Fig. \ref{fig:ddISMC}.
\begin{figure}[!ht]
        \centering
        \includegraphics[width=0.95\columnwidth]{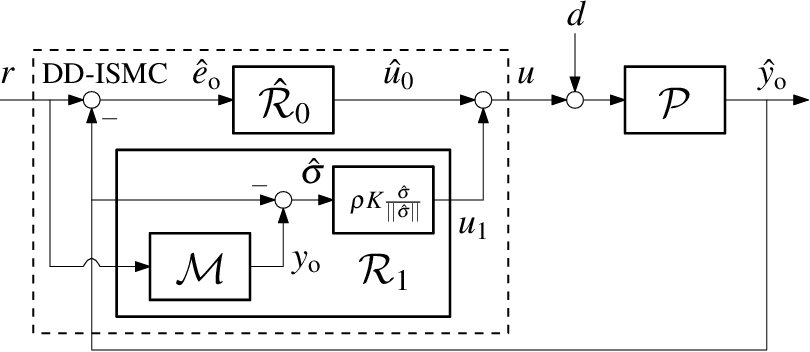}
        \caption{DD-ISMC scheme. The control signal $u(t)$ is given by the sum of the output of the ideal (data-driven) controller $\hat{\mathcal{R}}_0$ and the output of the sliding mode controller $\mathcal{R}_1$ (receiving as inputs the reference signal $r(t)$ and the controlled output $\ymdd(t)$), aimed at compensating the mismatches with respect to the reference model $\M$, including also the matched disturbance $d$ affecting the plant $\mP$.}
        \label{fig:ddISMC}
\end{figure}

\subsection{Model-reference control}\label{sec:mr_control}
Consider the following control law
\begin{equation}\label{eq:udd}
u(t)\coloneqq \udd(t)+u_1(t),
\end{equation}
where $u_1$ is a discontinuous component in the form of \eqref{eq:u1}, whereas $\udd$ plays the role of the ideal controller. Note that, such a controller could generally deviate from the ideal control action $\uid$, due to the lack of knowledge of the model. 
Specifically, $\udd$ is the output of a controller $\mRdd:e\mapsto \udd$ with transfer function $\Rdd(s)$ and state-space realization given by the tuple $\left(\Ardd,\Brdd,\Crdd,\Drdd\right)$, the state of which belongs $\R^{\ndd}$.

Therefore, the closed-loop system under the control input in \eqref{eq:udd} can be written as
\begin{equation}\label{eq:ssMdd}
\mMdd:
\begin{cases}
\xmddd(t) = \Amdd\xmdd(t)+\Bmdd r(t)+ \Budd \left( u_1(t)+d(t)\right)\\
\ymdd(t) =\Cmdd\xmdd(t),
\end{cases}
\end{equation}
where $\ymdd$ denotes the output trajectories of the system under the data-driven control $\udd$, $\Budd\coloneqq\smat{B& 0}'$, and $\Mdd\coloneq \Cmdd(sI-\Amdd)^{-1}\Bmdd$ is the corresponding transfer function, from $r$ to $\ymdd$, in terms of its state realization
\[
\Amdd\coloneqq\mmatrix[c|c]{A-B\Drdd C & B\Crdd\\\hline -\Brdd C & \Ardd},\;\Bmdd\coloneqq\mmatrix{B\Drdd \\ \Brdd},\; \Cmdd\coloneqq \mmatrix[cc]{C & 0},
\]
with $\xmdd\coloneqq\smat{\xdd &\xrdd}'$ being the 
corresponding state containing the plant state $\xdd$ under the evolution given by the data-driven  controller, with state referred to as $\xrdd$.

Having in mind the design of an ISMC, under the lack of knowledge of the nominal plant, a novel procedure to design the integral sliding variable is now discussed.
Therefore, it is instrumental to introduce the tuple $(\Am,\Bm,\Cm)$, that is one of the possible minimal state realizations associated with the desired closed-loop transfer function $M(s)$, i.e.,
\begin{equation}\label{eq:ssMddm}
\M:
\begin{cases}
\xmd(t) = \Am\xm(t)+\Bm r(t)\\
\yo(t) =\Cm\xm(t),
\end{cases}
\end{equation}
where $\xm \in \R^{\nm}$ is the corresponding state, and $\yo$ is the desired output evolution. Note that, starting from the transfer function $M(s)$, the tuple $(\Am,\Bm,\Cm)$ does not depend from the unknown plant tuple $(A,B,C)$, and it can be freely chosen among all the equivalent minimal state realizations of $M(s)$. 

Analogously to the classical definition, the sliding variable is now defined as
\begin{equation}\label{eq:sigmadd}
\sigmadd \coloneqq \sigmaddo + \zetadd\in\R^m,
\end{equation}
where $\sigmaddo=r-\ymdd=\edd$, whereas $\zetadd$ is the novel transient function, in analogy with \cite{utkin1996ism} determined by integrating
\begin{equation}\label{eq:zetadd}
\dot{\zetadd} (t) \coloneqq \Cm \Am \xm(t)+\Cm \Bm r(t)-\dot{r}(t).
\end{equation}
It is worth noticing that the new transient function depends on available quantities. 

Given the definition of the sliding variable in \eqref{eq:sigmadd}, the proposed data-based formulation of the discontinuous component $u_1$ in \eqref{eq:udd}, as the output of a regulator $\mathcal{R}_1: (r,\ymdd)\mapsto u_1$, is
\begin{equation}\label{eq:u1_dd}
    u_1(t)\coloneqq\rho K\frac{\sigmadd(t)}{{\norm{\sigmadd(t)}}},
\end{equation}
where $\rho >0$, and $K \in \R^{m\times m}$ is such that $(CB)K > 0$. Note that, since the model is unknown, the term $CB$ is not available. However, the design of the matrix gain $K$ will be addressed in Section \ref{sec:db_disc} via a data-based approach. 

We now show that the selection of the sliding variable in \eqref{eq:sigmadd} and \eqref{eq:zetadd} allows the achievement of a perfect reference model matching. To this aim, the following lemma and proposition are introduced.
\\
\begin{lemma}\label{le:sigma_eq}
Given the definition of the sliding variable in \eqref{eq:sigmadd}, where $\sigmaddo=r-\ymdd=\edd$ and the transient function is obtained by integrating \eqref{eq:zetadd}, it holds that 
\begin{equation}\label{eq:sigmadd_eq}
    \sigmadd(t)=\yo(t)-\ymdd(t)\,.
\end{equation}
\end{lemma}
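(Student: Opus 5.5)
The approach is to integrate the transient-function dynamics \eqref{eq:zetadd} explicitly and recognize the reference-model output in the result. By \eqref{eq:ssMddm}, $\yo=\Cm\xm$, and therefore $\yod(t)=\Cm\xmd(t)=\Cm\Am\xm(t)+\Cm\Bm r(t)$. Substituting into \eqref{eq:zetadd} yields
\[
\dot{\zetadd}(t)=\yod(t)-\dot{r}(t),
\]
which holds for almost every $t\geq 0$. The term $\dot r$ is understood in the weak sense if $r$ is only Sobolev regular. The key point is that the right-hand side of \eqref{eq:zetadd} is exactly the time derivative of $\yo-r$ along the reference model.

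Integrating from $0$ to $t$ gives $\zetadd(t)=\zetadd(0)+\yo(t)-\yo(0)-r(t)+r(0)$. Adding $\sigmaddo(t)=r(t)-\ymdd(t)$ as in \eqref{eq:sigmadd}, the reference signal cancels:
\[
\sigmadd(t)=\yo(t)-\ymdd(t)+\bigl(\zetadd(0)-\yo(0)+r(0)\bigr).
\]
It therefore remains to show that the constant in parentheses vanishes.

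This is where the initialization enters, and it is the only delicate step. In analogy with Section \ref{sec:ismcc}, the transient function is initialized as $\zetadd(0)=-\sigmaddo(0)=-(r(0)-\ymdd(0))$. The constant then reduces to $\yo(0)-\ymdd(0)$ with a sign change, and it vanishes provided $\yo(0)=\ymdd(0)$. This consistency condition follows from the standing setup. Problem \ref{pb:main} takes $x(0)=\xo(0)$, with $\yo(0)=C\xo(0)$, and the realization $(\Am,\Bm,\Cm)$ of $M(s)$ is initialized so that $\Cm\xm(0)=\yo(0)$. Consequently $\ymdd(0)=C\xdd(0)=C\xo(0)=\yo(0)$.

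I will state this choice of initial conditions explicitly, since the lemma is silent on it. With $\zetadd(0)=-\sigmaddo(0)$ and matched initial outputs, both sides of \eqref{eq:sigmadd_eq} are zero at $t=0$ and have equal derivatives, so \eqref{eq:sigmadd_eq} holds for all $t\geq 0$. No plant knowledge is used anywhere in the argument: only the reference model and the measured output appear.
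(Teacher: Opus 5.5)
Your proof is correct and follows the paper's route: substitute $\yod=\Cm\Am\xm+\Cm\Bm r$ into \eqref{eq:zetadd}, integrate, and add $\sigmaddo=r-\ymdd$ so that $r$ cancels. The only difference is that you track the integration constant, which the paper drops by implicitly initializing $\zetadd(0)=\yo(0)-r(0)$; that choice gives the identity for any plant initial state, whereas your ISMC-style choice $\zetadd(0)=-\sigmaddo(0)$ also needs $\ymdd(0)=\yo(0)$, which the standing setup $x(0)=\xo(0)$, $\yo(0)=C\xo(0)$ does provide.
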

\begin{proof}
The derivative of the transient function in \eqref{eq:zetadd}, by exploiting \eqref{eq:ssMddm}, can be expressed as 
\begin{equation}
    \label{eq:zetadd_eq}
    \dot{\zetadd} (t) = \yod(t)-\dot{r}(t).
\end{equation}
By integration one has
\begin{equation}
    \label{eq:zetadd_int}
    \zetadd(t) = \yo(t)-r(t)
\end{equation}
and since $\sigmaddo = r(t)-\ymdd(t)$, it implies \eqref{eq:sigmadd_eq}.

\end{proof}
\begin{proposition}\label{le:sm}
Given the plant \eqref{eq:ssPer}, such that \Ass{\ref{ass:d},\ref{ass:cb}} hold, controlled via 
\eqref{eq:udd} 
and sliding variable selected as in \eqref{eq:sigmadd} with transient function \eqref{eq:zetadd}, then, if a sliding mode $\sigmadd (t)=0,\,\forall\,t\geq 0$ is enforced, it holds that
\begin{equation}\label{eq:sm_y}
\yo(t)= \ymdd(t),\quad\forall\, t\geq 0,
\end{equation}
that is a perfect reference model matching is ensured.
\end{proposition}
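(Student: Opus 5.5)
This is essentially immediate from Lemma~\ref{le:sigma_eq}, and the plan is to invoke it directly. First I will note that the lemma holds for any control input. Its identity $\sigmadd(t)=\yo(t)-\ymdd(t)$ is derived only from the definition \eqref{eq:sigmadd} and the transient dynamics \eqref{eq:zetadd}, together with the reference model realization \eqref{eq:ssMddm}. None of these involve the plant tuple $(A,B,C)$, the disturbance $d$, or the specific form of $u_1$. Hence the identity holds along the trajectories of the perturbed closed loop \eqref{eq:ssMdd} driven by \eqref{eq:udd}.

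Second, under the hypothesis that a sliding mode $\sigmadd(t)=0$ is enforced for all $t\geq 0$, substituting into \eqref{eq:sigmadd_eq} gives $\yo(t)-\ymdd(t)=0$ for all $t\geq 0$, which is \eqref{eq:sm_y}.

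The only delicate point, and the step I would treat with care, is the integration constant in the lemma. Integrating \eqref{eq:zetadd_eq} yields $\zetadd(t)=\yo(t)-r(t)+\zetadd(0)-\yo(0)+r(0)$, so \eqref{eq:zetadd_int} requires the initialization $\zetadd(0)=\yo(0)-r(0)$. I would make this explicit: it is exactly the analogue of $\zeta(0)=-\sigma_0(0)$ in the classical ISMC, and it is what makes $\sigmadd(0)=\yo(0)-\ymdd(0)$. Consequently, requiring $\sigmadd(0)=0$ is consistent with the matching initial condition $\ymdd(0)=\yo(0)$, i.e.\ $x(0)=\xo(0)$ as in Problem~\ref{pb:main}.

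Finally, I would remark that the assumptions \Ass{\ref{ass:d},\ref{ass:cb}} are not used for the matching itself. They only matter for guaranteeing that the sliding mode can be enforced, which is assumed here and proved separately. In the sliding mode, the equivalent control makes $u_1$ compensate both $d$ and the mismatch $\udd-\uid$, so the output matching holds regardless of them.
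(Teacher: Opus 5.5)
Your proof is correct and follows the paper's own argument, which likewise just substitutes $\sigmadd(t)=0$ into the identity $\sigmadd(t)=\yo(t)-\ymdd(t)$ from Lemma~\ref{le:sigma_eq}. Your explicit treatment of the integration constant, namely the initialization $\zetadd(0)=\yo(0)-r(0)$ that the paper's lemma uses only implicitly, is a worthwhile clarification but does not change the argument.
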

\begin{proof}
From Lemma \ref{le:sigma_eq}, under condition $\sigmadd (t)=0,\,\forall\,t\geq 0$, \eqref{eq:sm_y} holds.
\end{proof}

Before delving into the analysis of the enforcement and stability of the sliding mode, it is essential to show the effects of applying a model-reference control strategy without perfect knowledge of the underlying plant. 
We can now introduce the following Lemma.
\\
\begin{lemma}\label{le:eq_system}
Given the plant \eqref{eq:ssPer}, such that \Ass{\ref{ass:d},\ref{ass:cb}} hold, controlled via $u=\udd$, it can be equivalently written as a function of the ideal control action $\uid$ as 
\begin{equation}\label{eq:ssMdd_eq}
\begin{cases}
\dot{x}(t) = Ax(t) + B\uid(t)+B(d(t)+d_0(t))\\
y(t) = Cx(t),\\
\end{cases}
\end{equation}
where $d_0(t) = \udd(t)-\uid(t)$.
\end{lemma}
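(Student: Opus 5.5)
This is a direct substitution argument: add and subtract the ideal control action in the input channel of \eqref{eq:ssPer}, and use that $\Budd$ shows the input and disturbance enter through the same matrix $B$.

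\textbf{Step 1 (specialize the plant).} Start from the perturbed plant \eqref{eq:ssPer} with $u=\udd$, i.e., $u_1\equiv 0$ in \eqref{eq:udd}. The state equation then reads $\dot{x}(t)=Ax(t)+B\udd(t)+Bd(t)$. Here $\udd$ is the output of $\mRdd$ driven by $e=r-y$. For the purposes of this lemma it is treated simply as a signal entering the input channel, so the feedback interconnection with $\mRdd$ plays no role.

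\textbf{Step 2 (add and subtract $\uid$).} Write
\[
B\udd(t)=B\uid(t)+B\bigl(\udd(t)-\uid(t)\bigr)=B\uid(t)+Bd_0(t).
\]
Here $\uid$ is the ideal control action whose existence is assumed in Section~\ref{sec:nom_dyn}, guaranteed under \Ass{\ref{ass:cb}}. Substituting gives $\dot{x}=Ax+B\uid+B(d+d_0)$. The output equation $y=Cx$ is unchanged, which yields \eqref{eq:ssMdd_eq}. The key structural point is that $d_0$ enters through $B$ and is therefore itself matched, exactly like $d$. This is what later lets the discontinuous component $u_1$ reject $d+d_0$ jointly.

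\textbf{Step 3 (well-posedness).} The identity is only meaningful if $\uid$ and $\udd$ are defined along the same trajectory. I would note that $\uid$ is regarded as a given exogenous time signal, namely the action that realizes $\yo$ from $\xo(0)$, while $\udd$ is the signal generated in the loop. The decomposition therefore holds pointwise in time as an algebraic identity, and no model knowledge is needed. The only subtlety I expect is interpretive: $d_0$ is not an independent exogenous signal, since it depends on $x$ through $e$. Boundedness of $d_0$, which is needed to size $\rho$, is therefore not claimed here and is deferred to the later stability analysis of the sliding mode. The lemma itself requires no estimate.
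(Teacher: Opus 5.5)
Your proposal is correct and matches the paper's proof. The paper substitutes $d_0=\udd-\uid$ into \eqref{eq:ssMdd_eq} to recover \eqref{eq:ssPer} with $u=\udd$, which is the same identity $B\udd=B\uid+B(\udd-\uid)$ you use, only read in the opposite direction; your Step 3 remark that $d_0$ depends on the closed-loop trajectory, with its boundedness deferred, is consistent with how the paper later treats it (cf.\ \eqref{eq:d0_aug}).
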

\begin{proof}
Given \eqref{eq:ssMdd_eq} and substituting the expression of $d_0$, we obtain
 \begin{equation}\label{eq:ssMdd_eq_pr}
\begin{cases}
\dot{x}(t) = Ax(t) + B\udd(t)+Bd(t)\\
y(t) = Cx(t),\\
\end{cases}
\end{equation} 
that is exactly the plant \eqref{eq:ssPer}, controlled via $u=\udd$.
\end{proof}
In its simplicity, Lemma \ref{le:eq_system} provides a key interpretation of the model-reference control problem in the absence of perfect knowledge of the plant. In fact, the controlled system can be seen as the one controlled by the ideal control action $\uid$, which achieves the perfect matching of the reference model in the absence of disturbance $d$, plus an additional matched disturbance $d_0$, which accounts for the mismatch of the designed control action. Despite \eqref{eq:ssMdd_eq} representing only a fictitious dynamical system, we will use it in the following to show how the proposed sliding mode control scheme can achieve the robustness property in Proposition \ref{le:sm} for a model-reference problem, even without perfect knowledge of the plant.

As a first step, we derive the expression of the equivalent control  \cite[Ch. 2]{utkin1992smc}, the one necessary to keep the sliding variable \eqref{eq:sigmadd} at zero.
\\
\begin{proposition}\label{le:eq_control} 
Given the plant \eqref{eq:ssPer}, such that \Ass{\ref{ass:d},\ref{ass:cb}} hold, controlled via $u=\udd+u_1$, and the sliding variable in \eqref{eq:sigmadd} with transient function \eqref{eq:zetadd}, if a sliding mode $\sigmadd(t)=0$, $\forall t \geq 0$ is enforced,  the equivalent control to \eqref{eq:u1_dd} is
\begin{equation}\label{eq:u1_eq}
\ueq(t) = - d(t) - d_0(t).
\end{equation}
Moreover, the equivalent system results in being
 \begin{equation}
        \label{eq:equivalent_system}
        \begin{cases}
        \xddd(t) = A\xdd(t) + B\uid(t) \\
        \ymdd(t) = C\xdd(t).\\
        \end{cases}
\end{equation}
\end{proposition}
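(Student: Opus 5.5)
The argument differentiates the sliding variable along the closed-loop trajectories, imposes $\dot{\sigmadd}=0$, and solves for the control. By Lemma \ref{le:sigma_eq}, $\sigmadd(t)=\yo(t)-\ymdd(t)$, so
\begin{equation*}
\dot{\sigmadd}(t)=\yod(t)-C\xddd(t).
\end{equation*}
For the plant dynamics we use the fictitious representation of Lemma \ref{le:eq_system}, extended with $u_1$. Since $u=\udd+u_1$ and $d_0=\udd-\uid$, the plant state evolves as
\begin{equation*}
\xddd(t)=A\xdd(t)+B\uid(t)+B\bigl(u_1(t)+d(t)+d_0(t)\bigr).
\end{equation*}
Hence
\begin{equation*}
\dot{\sigmadd}(t)=\yod(t)-CA\xdd(t)-CB\uid(t)-CB\bigl(u_1(t)+d(t)+d_0(t)\bigr).
\end{equation*}

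The key step is to express $\yod$ through the plant data. By definition of the ideal action, the system \eqref{eq:ssNom} driven by $\uid$ reproduces $\yo$ with state $\xo$. Therefore $\yod(t)=C\dot{\xo}(t)=CA\xo(t)+CB\uid(t)$, and
\begin{equation*}
\dot{\sigmadd}(t)=CA\bigl(\xo(t)-\xdd(t)\bigr)-CB\bigl(u_1(t)+d(t)+d_0(t)\bigr).
\end{equation*}

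Setting $\dot{\sigmadd}=0$ and using \Ass{\ref{ass:cb}} to invert $CB$ gives
\begin{equation*}
\ueq(t)=-d(t)-d_0(t)+(CB)^{-1}CA\bigl(\xo(t)-\xdd(t)\bigr).
\end{equation*}
It remains to show that the last term vanishes, i.e.\ $\xdd\equiv\xo$. Substituting $u_1=\ueq$ into the state equation cancels $d$ and $d_0$ but leaves the extra term $B(CB)^{-1}CA(\xo-\xdd)$. The error $\epsilon\coloneqq\xo-\xdd$ therefore satisfies the homogeneous linear system
\begin{equation*}
\dot{\epsilon}(t)=\bigl(I-B(CB)^{-1}C\bigr)A\,\epsilon(t),
\end{equation*}
with $\epsilon(0)=0$, because Problem \ref{pb:main} prescribes $x(0)=\xo(0)$. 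By uniqueness of solutions of linear ODEs, $\epsilon\equiv0$.

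Consequently $\ueq=-d-d_0$, which is \eqref{eq:u1_eq}. Substituting this into the plant dynamics gives $\xddd=A\xdd+B\uid$ with output $\ymdd=C\xdd$, which is \eqref{eq:equivalent_system}.

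The main obstacle is justifying $\yod=CA\xo+CB\uid$ and $\xdd(0)=\xo(0)$. Both rest on the standing hypotheses: the existence of $\uid$ realizing $\M$ on the plant, and the initial-condition matching of Problem \ref{pb:main}. Without matched initial conditions the equivalent control contains the additional term $(CB)^{-1}CA\epsilon$, and the claimed formula fails. I would make these hypotheses explicit rather than attempt an argument through $(\Am,\Bm,\Cm)$, since that realization is unrelated to $(A,B,C)$.
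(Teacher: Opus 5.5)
Your proposal is correct and follows essentially the same route as the paper. Both arguments differentiate $\sigmadd=\yo-\ymdd$ (Lemma~\ref{le:sigma_eq}), write the plant through Lemma~\ref{le:eq_system} with the extra input $u_1$, and use $\yod=CA\xo+CB\uid$ together with the invertibility of $CB$ to obtain $\ueq=-d-d_0-(CB)^{-1}CA(\xdd-\xo)$. The residual term is then removed via the homogeneous error dynamics $\dot{\tilde{x}}=(I-B(CB)^{-1}C)A\tilde{x}$ with $\tilde{x}(0)=0$. Tracing $\tilde{x}(0)=0$ back to the initial-condition matching of Problem~\ref{pb:main} usefully makes explicit a step the paper states without comment.
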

\begin{proof}
The proof follows directly from Lemma \ref{le:sigma_eq} and Proposition \ref{le:sm}. Indeed, under the condition of $\sigmadd(t)=0$, $\tg$, given the definition of the sliding variable \eqref{eq:sigmadd_eq}, imposing $\dot{\sigmadd}(t)=0$, $\tg$, is equivalent to $\yod(t)=\ymddd(t)$, $\tg$.
Note that, leveraging \eqref{eq:ssNom} and the definition of $\uid$, $\yod=CA\xo +CB\uid$. Moreover, exploiting plant \eqref{eq:ssPer} controlled via $u=\udd+\ueq$, and the result of Lemma \ref{le:eq_system}, one has that 
\begin{equation}
    \ymddd = CA\xdd + CB\uid+CB(d+d_0+\ueq).
\end{equation}
Hence, $\dot{\sigmadd}(t)=0$, $\tg$ implies that
 \begin{equation}\label{eq:u1eq_proof_1}
CA\xo +CB\uid = CA\xdd + CB\left(\uid +d + d_0 + \ueq \right).
\end{equation} 
Under \Ass{\ref{ass:cb}}, $\mathrm{rank}(CB)=m$, meaning that it is invertible, thus ensuring the existence and the uniqueness of the solution of \eqref{eq:u1eq_proof_1} with respect to $\ueq$.  Then, one obtains
\begin{equation}\label{eq:u1_eq_1}
\ueq(t) = - d(t) - d_0(t) -\left(CB\right)^{-1}CA\left(\xdd(t)-\xo(t)\right).
\end{equation}
By substituting \eqref{eq:u1_eq_1} into \eqref{eq:ssPer} we obtain the equivalent system
\begin{equation}
        \label{eq:equivalent_system2}
        \xddd(t) = A\xdd(t) + B\uid(t)  -B\left(CB\right)^{-1}CA\left(\xdd(t)-\xo(t)\right).
\end{equation}
Now, exploiting the dynamics of the nominal system \eqref{eq:ssNom} controlled via $\uid$, i.e., $\dot{x}_\mathrm{o}(t)=A\xo(t)+B\uid(t)$, subtracting it to \eqref{eq:equivalent_system2} and defining $\tilde{x}(t)=\xdd(t)-\xo(t)$, leads to
\begin{equation}
    \label{eq:xtilde_dyn}
    \dot{\tilde{x}}(t)=\left(I-B\left(CB\right)^{-1}C\right) A \tilde{x}(t)
\end{equation}
Since $\tilde{x}(0)=0$, this implies that $\tilde{x}(t)=0$, $\tg$. As a consequence, the equivalent control becomes \eqref{eq:u1_eq}, and the equivalent system results in \eqref{eq:equivalent_system}, which corresponds to the nominal plant \eqref{eq:ssNom} controlled via $\uid$.
\end{proof}
\begin{remark}[Equivalence with model-based ISMC]\label{rem:1}
    It is worth highlighting that the equivalent control in \eqref{eq:u1_eq} resembles that in \cite[Equation 9]{utkin1996ism}, and they coincide when $d_0=0$, that is when data-driven controller coincides with ideal control action $\uid$.\Endofremark
\end{remark}

Now, what is needed is to study the stability of the closed-loop system under the sliding mode conditions and to prove that the proposed data-driven controller enforces a sliding mode.

\subsection{Stability of the closed-loop system}
In this section, we devise the conditions required to guarantee the stability of the system in sliding mode, i.e., the properties of the poles of the system in sliding motion. Our main result is grounded on a fundamental result in \cite[Sec. 3.4]{edwards1998smc}, which is briefly recalled hereafter.

Given a general dynamical system of the form
\begin{equation}\label{eq:plant_spurg}
    \begin{cases}
\dot{x}(t) = \Phi x(t) + \Gamma u(t)\\
\sigmadd(t) = S x(t)\,,
\end{cases}
\end{equation}
and a discontinuous control action capable of guaranteeing a sliding motion, i.e., such that $\sigmadd(t)=0$, the following result can be stated. We remark that the triple $\left(\Phi,\Gamma,S\right)$ in the following lemma corresponds to the triple $\left(A,B,S\right)$ in the notation in \cite[Sec. 3.4]{edwards1998smc}.
\\
\begin{lemma} \label{lemma:spurg}(Proposition 3.4 in \cite{edwards1998smc})
    The poles of the sliding motion are given by the invariant zeros of the system triple $\left(\Phi,\Gamma,S\right)$.
\end{lemma}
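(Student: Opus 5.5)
The plan is to reduce the statement to an elementary eigenvalue correspondence between the reduced-order sliding dynamics and the Rosenbrock system matrix of $\left(\Phi,\Gamma,S\right)$. I will assume what is implicit in \cite[Sec. 3.4]{edwards1998smc} and in our setting: $\Gamma$ has full column rank $m$ and $S\Gamma$ is nonsingular. The latter is the analogue of \Ass{\ref{ass:cb}} and is what makes the equivalent control well defined.

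First, I derive the sliding motion exactly as in the proof of Proposition \ref{le:eq_control}. Imposing $\dot{\sigmadd}=S\Phi x+S\Gamma u=0$ gives $u_{\mathrm{eq}}=-(S\Gamma)^{-1}S\Phi x$. Hence $\dot x=\mathcal{P}\Phi x$ with the projector $\mathcal{P}\coloneqq I-\Gamma(S\Gamma)^{-1}S$, and the motion is confined to $\mathcal{N}\coloneqq\ker S$, of dimension $n-m$. Since $S\mathcal{P}=0$, the subspace $\mathcal{N}$ is invariant under $\mathcal{P}\Phi$. The \emph{poles of the sliding motion} are therefore the eigenvalues of the restriction $\mathcal{P}\Phi|_{\mathcal{N}}$. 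The remaining $m$ eigenvalues of $\mathcal{P}\Phi$ are zero; they correspond to the directions transversal to $\mathcal{N}$ that are annihilated by $S$ and are not part of the sliding motion.

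Second, I characterize the invariant zeros. A value $\lambda$ is an invariant zero of $\left(\Phi,\Gamma,S\right)$ when the Rosenbrock matrix $\smat{\lambda I-\Phi & -\Gamma\\ S & 0}$ loses rank, i.e., when there is a nonzero pair $(v,w)$ with $(\lambda I-\Phi)v=\Gamma w$ and $Sv=0$.
\begin{itemize}
\item If $v=0$, then $\Gamma w=0$, which forces $w=0$ by the rank of $\Gamma$. So necessarily $v\in\mathcal{N}\setminus\{0\}$.
\item Premultiplying the first equation by $S$ and using $Sv=0$ gives $w=-(S\Gamma)^{-1}S\Phi v$.
\item Substituting back yields $\lambda v=\mathcal{P}\Phi v$, so $\lambda$ is an eigenvalue of $\mathcal{P}\Phi|_{\mathcal{N}}$.
\end{itemize}
Conversely, given an eigenpair $(\lambda,v)$ of $\mathcal{P}\Phi|_{\mathcal{N}}$, setting $w\coloneqq-(S\Gamma)^{-1}S\Phi v$ produces a nonzero kernel vector of the Rosenbrock matrix. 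This gives equality of the two sets.

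The main obstacle is matching algebraic multiplicities, not just the sets. I would handle it with a change of coordinates $x=T\smat{z_1\\ z_2}$, where the columns of $T$ are a basis $N$ of $\mathcal{N}$ together with $\Gamma$; this $T$ is invertible because $S\Gamma$ is nonsingular. In these coordinates $S T=\smat{0 & S\Gamma}$ and $T^{-1}\Gamma=\smat{0\\ I_m}$. The Rosenbrock matrix can then be brought by unimodular row and column operations to $\mathrm{diag}\left(\lambda I-\Phi_{11},\,I_{2m}\right)$ up to sign, where $\Phi_{11}$ is the $(1,1)$ block of $T^{-1}\Phi T$. The same $\Phi_{11}$ represents $\mathcal{P}\Phi|_{\mathcal{N}}$ in the basis $N$. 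Consequently the zero polynomial of $\left(\Phi,\Gamma,S\right)$ equals $\det(\lambda I-\Phi_{11})$, which is exactly the characteristic polynomial of the sliding motion, multiplicities included.
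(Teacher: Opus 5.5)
The paper does not prove this lemma itself; it quotes it from \cite{edwards1998smc} and uses it only inside the proof of Theorem~\ref{th:stab}. Your argument is correct, and it is in substance the standard proof from that reference.

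In the reference, one works in regular-form coordinates. There $\Gamma$ has a zero upper block and an invertible lower block, and $S$ splits as $\begin{bmatrix} S_1 & S_2\end{bmatrix}$. The sliding motion is then governed by $\Phi_{11}-\Phi_{12}S_2^{-1}S_1$, and a single block column operation on the Rosenbrock matrix exposes $\lambda I-(\Phi_{11}-\Phi_{12}S_2^{-1}S_1)$.

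Your basis $T=\begin{bmatrix} N & \Gamma\end{bmatrix}$ is the particular regular form adapted to $\ker S$. In it, $S$ becomes $\begin{bmatrix} 0 & S\Gamma\end{bmatrix}$ and the reduced matrix is directly $\Phi_{11}$. Indeed $T^{-1}\mathcal{P}T=\mathrm{diag}(I,0)$, so $\Phi_{11}$ does represent $\mathcal{P}\Phi|_{\mathcal{N}}$, and your unimodular reduction goes through.

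Your version adds two things. First, the eigenvector argument gives the set equality without choosing coordinates. Second, your last step implicitly shows that the Rosenbrock determinant equals $\pm\det(S\Gamma)\det(\lambda I-\Phi_{11})\not\equiv 0$. So the matrix has full normal rank, and ``loses rank'' coincides with the paper's ``loses normal rank'' definition.

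Two minor remarks. Your standing hypothesis that $S\Gamma$ is nonsingular is exactly what the paper's application supplies, since $\Ca\BaU=-CB$ is invertible by \Ass{\ref{ass:cb}}. Full column rank of $\Gamma$ follows from it automatically (if $\Gamma w=0$ then $S\Gamma w=0$), so it need not be assumed separately. In your side comment, the $m$ zero eigenvalues of $\mathcal{P}\Phi$ arise because $\mathcal{P}\Phi$ maps all of $\R^n$ into $\ker S$ (equivalently, $\mathcal{P}$ annihilates the range of $\Gamma$). They do not arise from transversal directions being ``annihilated by $S$''. This wording slip does not affect the proof.
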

As a result of this lemma, the stability property in sliding motion can be checked by looking at the invariant zeros of the plant, when the output $\sigmadd$ is considered. We now re-frame this condition in our scenario, leading to the following result.
\\
\begin{theorem}\label{th:stab}
Given the plant \eqref{eq:ssPer}, such that \Ass{\ref{ass:d},\ref{ass:cb}} hold, controlled via \eqref{eq:udd}, and a sliding variable selected as in \eqref{eq:sigmadd}, with transient function \eqref{eq:zetadd}, then, if a sliding mode $\sigmadd (t)=0,\,\forall\,t\geq 0$, is enforced, the poles in sliding motion are given by the ensemble of the invariant zeros of the plant \eqref{eq:ssPer}, the poles of the data-driven controller $\Rdd(s)$, and the poles of the desired reference model $M(s)$.
\end{theorem}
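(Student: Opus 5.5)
The plan is to put the closed-loop system together with the reference model into the form \eqref{eq:plant_spurg} and then invoke Lemma~\ref{lemma:spurg}. Stack the plant--controller state $\xmdd=\smat{\xdd' & \xrdd'}'$ from \eqref{eq:ssMdd} with the reference-model state $\xm$ from \eqref{eq:ssMddm} into $\chi\coloneqq\smat{\xdd' & \xrdd' & \xm'}'$. By Lemma~\ref{le:sigma_eq}, $\sigmadd=\yo-\ymdd=\Cm\xm-C\xdd$, so the sliding variable is a linear output of $\chi$ with $S=\smat{-C & 0 & \Cm}$. The discontinuous input $u_1$, with $d$ lumped in, enters through $\Gamma=\smat{B' & 0 & 0}'$. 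The reference $r$ is an exogenous signal that does not affect the location of the poles of the sliding motion, so it is set to zero for the pole analysis, and
\[
\Phi=\mmatrix[c|c]{\Amdd & 0\\\hline 0 & \Am}.
\]
Lemma~\ref{lemma:spurg} then says that the poles in sliding motion are the invariant zeros of $(\Phi,\Gamma,S)$. These are the values $s$ at which the Rosenbrock matrix
\[
\mmatrix{sI-\Phi & -\Gamma\\ S & 0}
\]
loses rank.

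The main step is to factor this Rosenbrock matrix, exploiting its block structure. The $\xm$ block is decoupled: $\Gamma$ does not excite it, and it only appears in $S$ through $\Cm$. Ordering rows and columns so that the $(sI-\Am)$ block sits apart, the determinant splits as $\det(sI-\Am)$ times the determinant of the remaining matrix. In that remaining matrix the $\Cm$ entry is eliminated by the zero block. The factor $\det(sI-\Am)$ gives exactly the poles of $M(s)$, since $(\Am,\Bm,\Cm)$ is minimal.

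What remains is the Rosenbrock matrix of the triple $(\Amdd,\Budd,\smat{-C&0})$:
\[
\mmatrix{sI-A+B\Drdd C & -B\Crdd & -B\\ \Brdd C & sI-\Ardd & 0\\ -C & 0 & 0}.
\]
Add $\Drdd$ times the last row to the first block row, which clears the $B\Drdd C$ term. Then add column operations that multiply the last block column by $\Crdd$ and add it to the second block column, which clears $-B\Crdd$. The first block row becomes $\smat{sI-A & 0 & -B}$, while the second row keeps $\Brdd C$ and $sI-\Ardd$. Next use the last row $\smat{-C&0&0}$ to remove $\Brdd C$ by adding $\Brdd$ times it. The matrix becomes block triangular with diagonal blocks $sI-\Ardd$ and the plant Rosenbrock matrix
\[
\mmatrix{sI-A & -B\\ -C & 0}.
\]
The rank drops exactly at the eigenvalues of $\Ardd$, which are the poles of $\Rdd(s)$, and at the invariant zeros of $(A,B,C)$. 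Collecting the three factors gives the claim.

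The main obstacle is making sure these elementary operations are unimodular and preserve rank pointwise in $s$, so that the invariant zeros are genuinely the union of the three sets. This holds for constant-matrix row and column operations. A related point is that the Rosenbrock matrix is square, because $\Gamma$ and $S$ both have $m$ columns and rows respectively, and that it is not identically singular. The latter follows from \Ass{\ref{ass:cb}}: the high-order coefficient is governed by $CB$ being invertible, so the normal rank is full. A second subtlety is the condition in Lemma~\ref{lemma:spurg} that $S\Gamma$ be nonsingular. Here $S\Gamma=-CB$, which is invertible by \Ass{\ref{ass:cb}}, so the lemma applies. If the $(sI-\Arid)$-type poles of the controller coincide with zeros, they should be counted with multiplicity, which is consistent with reading "ensemble" as a multiset.
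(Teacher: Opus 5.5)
Your proposal is correct and follows essentially the paper's route. You form the augmented plant--controller--reference-model system with $S=[\,-C\;\;0\;\;\Cm\,]$ and apply Lemma~\ref{lemma:spurg}, then use elementary operations on the Rosenbrock matrix to split off $sI-\Am$, $sI-\Ardd$ and the plant's system matrix; your determinant factorization and explicit checks that $S\Gamma=-CB$ is invertible and the normal rank is full are welcome additions the paper leaves implicit. There are two trivial slips: the first row operation should add $B\Drdd$ (not $\Drdd$) times the last row, and the column operation should subtract $\Crdd$ times the last column (it is in any case unnecessary, since block-permuting already gives block-triangular form).
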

\begin{proof}
In order to prove the theorem, combine systems \eqref{eq:ssMdd} and \eqref{eq:ssMddm} in order to write the following augmented system:
\begin{equation}\label{eq:aug_sys}
    \begin{cases}
        \xad(t) = \Aa \xa(t) + \BaU \left(u_1(t)+d(t)\right) + \BaR r(t) \\
        \sigmadd(t) = \Ca \xa(t),
    \end{cases}
\end{equation}
where $\xa\coloneqq\smat{\xdd&\xrdd&\xm}'$ is the augmented state vector obtained coupling the plant $\mP_d$ in \eqref{eq:ssPer}, the controller $\mRdd$, and the desired reference model $\M$, respectively. Moreover, the tuple $(\Aa,\BaU,\BaR,\Ca)$ is defined as follows
\begin{equation}\label{eq:AugMatr}
    \begin{array}{ll}
         \Aa = \begin{bmatrix}A-B\Drdd C & B\Crdd & 0 \\ -\Brdd C & \Ardd & 0 \\ 0 & 0 & \Am \end{bmatrix} & \BaU = \begin{bmatrix}
            B \\
            0 \\
            0
        \end{bmatrix} \\
         \BaR = \begin{bmatrix} B\Drdd \\ \Brdd \\ \Bm \end{bmatrix} & \Ca =  \begin{bmatrix}-C & 0 & \Cm \end{bmatrix},
    \end{array}
\end{equation}

and the triple $(\Aa,\BaU,\Ca)$ corresponds to $(\Phi,\Gamma,S)$ in \eqref{eq:plant_spurg}. Under sliding mode condition, when the disturbance $d(t)$ is ideally compensated, we can apply Lemma \ref{lemma:spurg} to \eqref{eq:aug_sys}, meaning that the poles in sliding motion are the invariant zeros of $(\Aa,\BaU,\Ca)$.
By definition, see e.g., \cite[Lemma 2.9]{colaneri1997control}, the invariant zeros are given by
\begin{equation}
    \label{eq:inv_zeros}
    \left\{ z \in \mathbb{C}: Q(z) \text{ loses normal rank}\right\}, 
\end{equation}
where $Q(z)$ is the Rosenbrock's system matrix defined as 
\begin{equation}
    \label{eq:ros_matrix}
    Q(z) = 
    \begin{bmatrix}
        zI-\Aa & \BaU \\
        -\Ca   &  0
    \end{bmatrix}.
\end{equation}
Expanding \eqref{eq:ros_matrix}, one has
\begin{equation}
    \label{eq:ros_matrix_exp}
    Q(z) = 
    \mmatrix[ccc|c]{
        zI-(A-B\Drdd C) & -B\Crdd    & 0        &  B \\
        \Brdd C         & zI - \Ardd & 0        &  0 \\
        0               & 0          & zI-\Am  & 0 \\
        \hline
        C               & 0          & -\Cm    & 0
        }.
\end{equation}

In the following steps, we exploit classical linear algebra results, that is the rank of a matrix is invariant with respect to elementary transformations (see \cite[Ch. 5]{ayres1962matrices}). Hence, the rank of matrix $Q(z)$ in \eqref{eq:ros_matrix_exp} is equivalent to that of matrix 
\begin{equation}
    \label{eq:ros_perm1}
    Q_1(z) = \Upsilon_{1} Q(z) \Upsilon_1,
\end{equation}
where $\Upsilon_1$ is the permutation matrix
\begin{equation}
    \label{eq:perm_ros_1}
    \Upsilon_1 = \mmatrix[cccc]{
        I_n & 0 & 0 & 0 \\
        0 & I_{\ndd} & 0 & 0 \\
        0 & 0 & 0 & I_{\nm} \\
        0 & 0 & I_{m} & 0},
\end{equation}
leading to
\begin{equation}\label{eq:ros_Q1}
    Q_1(z) = 
    \mmatrix[ccc|c]{
        zI-(A-B\Drdd C) & -B\Crdd    & B        &  0 \\
        \Brdd C         & zI - \Ardd & 0        &  0 \\
        C               & 0          & 0        &  -\Cm \\
        \hline
        0               & 0          & 0         & zI-\Am
        }.
\end{equation}
Now, matrix \eqref{eq:ros_Q1} is an upper block triangular matrix, meaning that it loses rank when its two diagonal blocks do. Hence, given that the lower diagonal block is equal to $zI-\Am$, it means that the eigenvalues of the reference model are invariant zeros of \eqref{eq:aug_sys}, thus, exploiting Lemma \ref{lemma:spurg}, poles in sliding motion.\\
Then, we limit the analysis to the upper diagonal block 
\begin{equation}\label{eq:ros_hat}
    Q^{(1,1)}_{1}(z) = 
    \mmatrix[cc|c]{
        zI-(A-B\Drdd C) & -B\Crdd    & B \\
        \Brdd C         & zI - \Ardd & 0 \\
        \hline
        C               & 0          & 0 
        }.
\end{equation}
Now, exploiting another elementary transformation, that is subtracting the third row block pre-multiplied by $\Brdd$ to the second row block of $Q^{(1,1)}_{1}(z)$, we get the matrix 
\begin{equation}\label{eq:ros_row1}
    \hat{Q}^{(1,1)}_{1}(z) = 
    \mmatrix[cc|c]{
        zI-(A-B\Drdd C) & -B\Crdd    & B \\
        0        & zI - \Ardd & 0 \\
        \hline
        C               & 0          & 0 
        }.
\end{equation}
Then, exploiting a similar permutation procedure to   \eqref{eq:ros_perm1}, the rank of matrix $\hat{Q}^{(1,1)}_{1}(z)$ in \eqref{eq:ros_row1} is equivalent to that of matrix 
\begin{equation}
    Q_2(z) = \Upsilon_2  \hat{Q}^{(1,1)}_{1} \Upsilon_2,
\end{equation}
where $\Upsilon_2$ is the permutation matrix \begin{equation}
    \label{eq:perm_ros_2}
    \Upsilon_2 = \mmatrix[ccc]{
        I_n & 0 & 0 \\
        0 & 0 & I_{\ndd} \\
        0 & I_m & 0},
\end{equation}
leading to
\begin{equation}\label{eq:ros_Q2}
    Q_{2}(z) = 
    \mmatrix[cc|c]{
        zI-(A-B\Drdd C) & B    & -B\Crdd \\
        C        &  0 & 0 \\
        \hline
        0               & 0          & zI - \Ardd 
        }.
\end{equation}
As before, matrix \eqref{eq:ros_Q2} is an upper block triangular matrix, meaning that it loses rank when its two diagonal blocks do. Hence, given that the lower diagonal block is equal to $zI-\Ardd$, it means that the eigenvalues of the data-driven controller are also invariant zeros of \eqref{eq:aug_sys}, thus, by virtue of Lemma \ref{lemma:spurg}, poles in sliding motion.

Finally, consider the upper diagonal block 
\begin{equation}\label{eq:ros_perm2}
    Q^{(1,1)}_{2}(z) = 
    \mmatrix[c|c]{
        zI-(A-B\Drdd C) & B \\
        \hline
        C        &  0 }.
\end{equation}
Following the same substitution procedure in \eqref{eq:ros_row1}, the second row block of $Q^{(1,1)}_{2}(z)$, pre-multiplied by $B\Drdd$, is subtracted to the first row block, leading to:
\begin{equation}\label{eq:ros_row2}
    \hat{Q}^{(1,1)}_{2}(z) = 
    \mmatrix[c|c]{
        zI-A & B \\
        \hline
        C    &  0 }.
\end{equation}
As a last step, the second row of $\hat{Q}^{(1,1)}_{2}$ is substituted with its opposite, getting matrix
\begin{equation}\label{eq:ros_row2opp}
    \Breve{Q}^{(1,1)}_{2}(z) = 
    \mmatrix[c|c]{
        zI-A & B \\
        \hline
        -C    &  0 },
\end{equation}
which is the Rosenbrock's matrix of the plant \eqref{eq:ssPer}. This result implies that also the invariant zeros of the plant are invariant zeros of system \eqref{eq:aug_sys}, that is poles in sliding motion, which completes the proof.
\end{proof}
It is worth noticing that, consistently with the sliding mode control theory \cite[Ch. 3]{edwards1998smc}, Theorem \ref{th:stab} proves that the stability of the equivalent system is dictated by the invariant zeros of the plant, which are fixed by the choice of the selected input-output. 
On the other hand, Theorem \ref{th:stab} gives a guideline on how to select the reference model $\M$, and particularly the data-driven controller $\mRdd$, which cannot be unstable.
\\

\begin{remark}[Role of $d_0$ in Theorem \ref{th:stab}]
Note that, in system \eqref{eq:aug_sys}, the disturbance $d_0=\udd-\uid$ is implicitly contained in the tuple $(\Aa,\BaU,\BaR,\Ca)$, and its expression can be found by computing the equivalent control action $\ueq$, starting from \eqref{eq:aug_sys}.
Specifically, by posing  $\dot{\sigmadd}=0$, we get the expression
\begin{equation*}
   \Ca\Aa\xa + \Ca\BaU(u_1+d)+\Ca\Br r=0,
\end{equation*}
from which, the equivalent control action results 
\begin{equation}\label{eq:u1_eq_aug}
   \ueq = -d -(\Ca\BaU)^{-1}\left(\Ca\Aa\xa+\Ca\Br r\right).
\end{equation}
Then, by direct comparison of \eqref{eq:u1_eq_aug} with \eqref{eq:u1_eq}, we obtain \begin{equation}
    \label{eq:d0_aug}
    d_0 = (\Ca\BaU)^{-1}\left(\Ca\Aa\xa+\Ca\Br r\right),
\end{equation}
as a function of $(\Aa,\BaU,\BaR,\Ca)$.\Endofremark
\end{remark}

\subsection{Convergence analysis}
We assume now that the stability conditions in sliding mode proved in Theorem \ref{th:stab} are met and that the reference signal $r$ is bounded. As a consequence, since from  \eqref{eq:d0_aug} the disturbance $d_0$ is a function of the evolution of the augmented state $\xa$ in sliding motion and of the reference $r$, the following assumption can be introduced
\begin{assumptions}
\item\label{ass:d0}  $\norm{d_0(t)}\leq \bar{d}_0$,  $\bar{d}_0=\sup_{t\geq 0}\{\norm{d_0(t)}\}$.
\end{assumptions}

Relying on the existence of $\bar{d}_0$, we are now in a position to prove the enforcement of the sliding mode under the discontinuous control law in \eqref{eq:u1_dd}.
\\
\begin{theorem}
\label{th:sm_conv}
    Given the plant in \eqref{eq:ssPer} controlled via \eqref{eq:udd} with discontinuous action as in \eqref{eq:u1_dd}, and sliding variable defined as in \eqref{eq:sigmadd} with transient function as in \eqref{eq:zetadd}, and the reference model in \eqref{eq:ssMddm}, if \Ass{\ref{ass:d},\ref{ass:cb},\ref{ass:d0}} hold, and the gain $\rho$ in \eqref{eq:u1_dd} is such that
\begin{equation}\label{eq:rho_bound}
        \rho > \frac{\sqrt{\overline{\lambda}\left((CB)'CB\right)}}{\underline{\lambda}\left((CB)K\right)}\left( \bar{d}+ \bar{d}_0\right),
\end{equation}
    then a sliding mode $\sigmadd(t)=0$ is enforced $\tg$.
\end{theorem}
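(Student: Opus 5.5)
We use a Lyapunov argument on the sliding variable, as in the classical ISMC of \cite{utkin1996ism}, with candidate $V(\sigmadd)=\tfrac12\sigmadd'\sigmadd$. Since $\zetadd(0)=-\sigmaddo(0)$ follows from Lemma~\ref{le:sigma_eq} together with $x(0)=\xo(0)$ (i.e., $\sigmadd(0)=\yo(0)-\ymdd(0)=0$), it suffices to show that $\dot V<0$ whenever $\sigmadd\neq 0$. Then $\sigmadd$ can never leave the origin, so the sliding mode holds $\tg$.

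The first step is to compute $\dot{\sigmadd}$. By Lemma~\ref{le:sigma_eq}, $\dot{\sigmadd}=\yod-\ymddd$. Lemma~\ref{le:eq_system} rewrites the plant under $u=\udd+u_1$ as driven by $\uid$ plus the matched disturbance $d+d_0+u_1$. Hence
\begin{equation*}
\ymddd = CA\xdd + CB\uid + CB\left(d+d_0+u_1\right), \qquad \yod = CA\xo + CB\uid .
\end{equation*}
This gives
\begin{equation*}
\dot{\sigmadd} = -CA(\xdd-\xo) - CB(d+d_0) - \rho\, CBK\frac{\sigmadd}{\norm{\sigmadd}}.
\end{equation*}

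The term $CA(\xdd-\xo)$ is the delicate point, and I expect it to be the main obstacle. I would handle it by arguing that it vanishes while $\sigmadd\equiv 0$, reusing the argument of Proposition~\ref{le:eq_control}. That is, on the sliding manifold reached from $\tilde x(0)=0$, the dynamics \eqref{eq:xtilde_dyn} force $\tilde x\equiv 0$. Alternatively, in the spirit of Remark on the role of $d_0$, I would absorb the $\xa$-dependent drift into $d_0$ through \eqref{eq:d0_aug}, so that the bound $\bar d_0$ of \Ass{\ref{ass:d0}} covers it. With either reading, the drift reduces to $-CB(d+d_0)$.

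The final step is to bound $\dot V$. Using $\sigmadd'(CB)K\sigmadd\geq \underline{\lambda}\left((CB)K\right)\norm{\sigmadd}^2$, interpreted via the symmetric part according to the paper's convention on non-symmetric positive definite matrices, and using $\norm{CB(d+d_0)}\leq\sqrt{\overline{\lambda}((CB)'CB)}\,(\bar d+\bar d_0)$ from \Ass{\ref{ass:d},\ref{ass:d0}}, we obtain
\begin{equation*}
\dot V \leq -\norm{\sigmadd}\left(\rho\,\underline{\lambda}\left((CB)K\right)-\sqrt{\overline{\lambda}\left((CB)'CB\right)}\,(\bar d+\bar d_0)\right) = -\eta\norm{\sigmadd}.
\end{equation*}
Here $\eta>0$ exactly under \eqref{eq:rho_bound}. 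This reaching-type inequality, together with $\sigmadd(0)=0$, yields $\sigmadd(t)=0$ $\tg$ in the Filippov sense. It also shows that the equivalent control \eqref{eq:u1_eq} has norm at most $\bar d+\bar d_0$, which the gain dominates, consistent with Proposition~\ref{le:eq_control}.
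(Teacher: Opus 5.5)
Your proposal is essentially the paper's proof. It uses the same Lyapunov function $V=\tfrac12\sigmadd'\sigmadd$, the same expression for $\dot{\sigmadd}$ from Lemma~\ref{le:sigma_eq} and Lemma~\ref{le:eq_system}, and the same worst-case bound, and it removes the term $CA(\xdd-\xo)$ the same way: $\sigmadd(0)=0$ gives sliding from $t=0$, hence $\xdd\equiv\xo$ as in Proposition~\ref{le:eq_control}.

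Read literally, that last step (yours and the paper's) assumes what it proves, since the bound is needed exactly when $\sigmadd\neq 0$. You can close it in either of two ways:
\begin{itemize}
\item Use your second reading, with \Ass{\ref{ass:d0}} bounding the lumped term in \eqref{eq:d0_aug}, which equals $d_0+(CB)^{-1}CA(\xdd-\xo)$.
\item Use a first-exit argument. Let $t_1$ be the first time at which $\sigmadd$ would leave the origin; then $\xdd(t_1)=\xo(t_1)$. By continuity, $\norm{CA(\xdd-\xo)}$ stays below the strict margin in \eqref{eq:rho_bound} on a short interval after $t_1$. This gives $\dot V<0$ there whenever $\sigmadd\neq 0$, which contradicts the exit.
\end{itemize}
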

\begin{proof}
The proof follows Lyapunov arguments. Indeed, by selecting the candidate Lyapunov function 
    \begin{equation}\label{eq:V}
            V(\sigmadd)=\frac{1}{2}\sigmadd' \sigmadd,
    \end{equation}
    its time derivative is $\dot{V}(\sigmadd)=\sigmadd' \dot{\sigmadd}$.
    From Lemma \ref{le:sigma_eq}, the time derivative of the sliding variable can be expressed as $\dot{\sigmadd}=\yod - \ymddd$, where $\yod=CA\xo +CB\uid$, leveraging \eqref{eq:ssNom} and the definition of $\uid$. Moreover, exploiting plant \eqref{eq:ssPer} controlled via $u=\udd+u_1$, with $u_1$ as in \eqref{eq:u1_dd}, and the result of Lemma \ref{le:eq_system}, 
$\ymddd = CA\xdd + CB\uid+CB(d+d_0+u_1$). Hence, $\tg$, it holds that
    \begin{align}\label{eq:Vdot}
           \dot{V}(\sigmadd)=&
            \sigmadd'\bigg( CA\xo +CB\uid \nonumber \\
            &\left.-\left(CA\xdd + CB\left(\uid+d+d_0+\rho K\frac{\sigmadd}{{\norm{\sigmadd}}}\right)\right)\right) \nonumber\\             %
           = & \sigmadd' \left( -CA\left(\xdd-\xo \right) -CB\left(d+d_0\right)-\rho(CB)K\frac{\sigmadd}{{\norm{\sigmadd}}}\right) \nonumber \\
           = & - \eta \norm{\sigmadd},
   \end{align}
    where $   \eta = \rho\frac{\sigmadd'(CB)K\sigmadd}{\norm{\sigmadd}^2}+\frac{\sigmadd'\left(CA\left(\xdd-\xo \right) + CB\left(d+d_0\right)\right)}{{\norm{\sigmadd}}}$.  
    Now, in order to enforce the reachability condition \cite[Ch. 7]{slotine1991applied}, we need to prove that $\eta>0$, $\tg$, which means that
   \begin{align}\label{eq:eta_p1}
        \rho\frac{\sigmadd'(CB)K\sigmadd}{\norm{\sigmadd}^2} > -\frac{\sigmadd'\left(CA\left(\xdd-\xo \right) + CB\left(d+d_0\right)\right)}{{\norm{\sigmadd}}}.
    \end{align}
    Since \Ass{\ref{ass:cb}} holds, and matrix $(CB)K$ is positive definite by construction, hence \eqref{eq:eta_p1} can be written in terms of the control gain $\rho$ as  
    \begin{align}\label{eq:eta_p2}
        \rho > -\norm{\sigmadd} \frac{\sigmadd'\left(CA\left(\xdd-\xo \right) + CB\left(d+d_0\right)\right)}{\sigmadd'(CB)K\sigmadd} = \bar{\rho}(t).
    \end{align}
    Now, to obtain the bound in \eqref{eq:rho_bound}, exploiting \Ass{\ref{ass:d},\ref{ass:d0}} we consider the worst-case realization of $\bar{\rho}(t)$, that is 
        \begin{align}\label{eq:rho_p1}
        \bar{\rho}(t) \leq & \norm{\sigmadd} \frac{\left| \sigmadd'\left(CA\left(\xdd-\xo \right) + CB\left(d+d_0\right)\right)\right|}{\sigmadd'(CB)K\sigmadd}\nonumber \\
        \leq & \frac{\norm{\sigmadd}^2\norm{CA\left(\xdd-\xo \right) + CB\left(d+d_0\right)}}{\sigmadd'(CB)K\sigmadd}\nonumber \\    
        \leq & \frac{\norm{CA\left(\xdd-\xo \right) + CB\left(d+d_0\right)}}{\underline{\lambda}\left((CB)K\right)} \nonumber\\
        \leq & \frac{\norm{CA\left(\xdd-\xo \right)} + \norm{CB\left(d+d_0\right)}}{\underline{\lambda}\left((CB)K\right)} \nonumber\\
        \leq & \frac{\norm{CA\left(\xdd-\xo \right)} + \sqrt{\overline{\lambda}\left((CB)'CB\right)}\left(\norm{d}+\norm{d_0}\right)}{\underline{\lambda}\left((CB)K\right)} \nonumber\\
        \leq & \frac{\norm{CA\left(\xdd-\xo \right)} + \sqrt{\overline{\lambda}\left((CB)'CB\right)}\left(\bar{d}+\bar{d}_0\right)}{\underline{\lambda}\left((CB)K\right)}. \nonumber\\
    \end{align}
    By virtue of the reachability condition, there exists a time instant $\overline{t}$ such that $\sigmadd(t)=0$, $\forall t \geq \overline{t}$. Since $\sigmadd(0)=0$, then $\overline{t}=0$, and $\sigmadd(t)=0$ $\tg$, which implies that $\xdd(t)=\xo(t)$, $\tg$. Then, condition \eqref{eq:rho_p1}, combined with \eqref{eq:eta_p2}, reduces to \eqref{eq:rho_bound}.
    \end{proof}

From Theorem \ref{th:sm_conv}, it is evident that the control gain $\rho$ has to be greater than the sum of two terms. The former, $\bar{d}$,  corresponds to the minimum value to enforce the sliding mode, like in the case the nominal dynamics is known, as recalled in Section \ref{sec:ismcc} for the classical ISMC \cite{utkin1992smc}. On the other hand, the price to pay due to the lack of knowledge of the nominal dynamics is the need to dominate another term, $\bar{d}_0$, depending on the reference signal $r$ and the choice of state realization of the reference model $\M$ in \eqref{eq:ssMddm}, and the actual closed-loop system $\mMdd$ in \eqref{eq:ssMdd}. This term is in general unknown, but some considerations about its contribution are worth to be discussed. It is also worth highlighting that, under the condition $K=(CB)^{-1}$ and $\bar{d}_0=0$, \eqref{eq:rho_bound} re-conduces to the threshold proved in \cite[Sec. 2]{utkin1996ism}. 
\\
\begin{proposition}\label{prop:d0_dR}
Given the plant in \eqref{eq:ssPer} controlled via \eqref{eq:udd} with discontinuous action as in \eqref{eq:u1_dd}, and sliding variable defined as in \eqref{eq:sigmadd} with transient function as in \eqref{eq:zetadd}, and the reference model in \eqref{eq:ssMddm}, if a sliding mode $\sigmadd (t)=0,\,\forall\,t\geq 0$ is enforced,
then the transfer function between the reference $r$ and the disturbance $d_0$ is given by
\begin{equation}\label{eq:d0_tf}
    d_0 = \left(\Rdd(s)-\Rid(s)\right)\left(I-M(s)\right)r.
\end{equation}
\begin{proof}
Given the definition of $d_0 = \uid - \udd$, and considering zero initial conditions, it can be written as
\begin{equation}
\label{eq:d0_R_tf}
        d_0 = \Rdd(s) \edd - \Rid(s) \eo,
\end{equation}
where $\edd = r - \ymdd$, and $\eo = r-\yo$.
By virtue of Proposition \ref{le:sm}, $\ymdd = \yo$, meaning that $\edd = \eo = (I-M(s))r$. Finally, by combining it with \eqref{eq:d0_R_tf}, the result in \eqref{eq:d0_tf} holds.
\end{proof}
\end{proposition}
It is worth noticing that, although the residual disturbance $d_0$ is unknown, Proposition \ref{prop:d0_dR} highlights its dependence on the designed data-driven controller $\Rdd(s)$.

\section{Design Practices of the Proposed DD-ISMC}\label{sec:practices}
The DD-ISMC, proposed in Section \ref{sec:ddismc}, still requires a further discussion about the design of i) the data-driven controller $\Rdd(s)$, and ii) the matrix $K$ and the gain $\rho$ in the discontinuous control component \eqref{eq:u1_dd}.

\subsection{Data-driven controller design: a VRFT approach}\label{sec:R0_vrft}
Before discussing the selected model-reference approach, it is important to highlight the general validity of the method described in Section \ref{sec:ddismc}, which is in fact independent of the adopted design procedure of $\Rdd(s)$.

Problem \ref{pb:main}
can be effectively addressed using the data-driven control approach named VRFT \cite{Campi2002,campestrini2011virtual, Formentin2019}, which is a model-reference methodology, that enables the direct identification of an optimal controller from input-output data. This approach is particularly effective for parametric control of multivariable systems, as it allows for the direct computation of the controller, parameterized by a small set of variables, without requiring explicit process model identification, which typically involves large matrices. 
Furthermore, the VRFT approach allows to enhance practically useful properties associated with the residual disturbance $d_0$. In order to introduce such properties, it is instrumental to report in the following some key results of the VRFT proposed in \cite{Formentin2019}.
Further details about the VRFT design in continuous time and for MIMO systems are reported in the Appendix \ref{app:vrft}. One can refer to Appendix \ref{app:vrft} also for the notation adopted hereafter.

In the following, two different settings are considered: the case with the disturbance $d(t)=0$ (i.e., \emph{disturbance-free} setting), and the perturbed scenario (i.e., \emph{perturbed} setting). Note that the previous two cases corresponds to the \emph{noise-free} and the \emph{noisy} settings discussed in \cite{Formentin2019}. Indeed, any matched disturbance $d(t)$ satisfying \Ass{\ref{ass:d}} can be represented by an equivalent output disturbance belonging to $\mathcal{L}_2$.
Taking into account the \emph{disturbance-free setting} the following lemma holds.
\\
\begin{lemma}
\label{lemma:vrft_1}(Theorem 1 in \cite{Formentin2019})
Assume that $\Rid(s)$ belongs to the class of controllers in \eqref{eq:R0_vrft_fr}. It holds that
$
\theta_\mathrm{vr} = \theta_0
$,
with $\theta_\mathrm{vr}$ and $\theta_0$ being, respectively, the set of parameters obtained by the VRFT approach, and the one associated with ideal controller.
\end{lemma}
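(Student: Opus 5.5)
The plan is to reproduce the standard VRFT consistency argument in the continuous-time multivariable setting of Appendix~\ref{app:vrft}. In the disturbance-free setting $d=0$, the recorded output satisfies $y=P(s)u$ exactly. I will begin by recalling how VRFT builds the virtual reference $\bar r$ from the data, namely $y=M(s)\bar r$; Assumption \Ass{\ref{ass:cb}} and the relative-degree argument of the earlier remark make $M^{-1}(s)$, suitably prefiltered, well defined. From $\bar r$ one obtains the virtual error $\bar e=\bar r-y=(M^{-1}(s)-I)y$. Since the class \eqref{eq:R0_vrft_fr} is linear in the parameters, the controller can be written as $R(s;\theta)=\beta(s)'\theta$, or in vectorized form $\vecto$-linear in $\theta$. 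The VRFT estimate $\theta_\mathrm{vr}$ is then the minimizer of the quadratic cost $J_\mathrm{vr}(\theta)=\lVert L(s)(u-R(s;\theta)\bar e)\rVert_2^2$, which is a least-squares problem in $\theta$.

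The key step is to show that $\theta_0$ gives zero cost. By the remark on $R_0(s)$ existence, $\Rid(s)=(P^{-1}(s)M(s))(I-M(s))^{-1}$, which is equivalent to $\Rid(s)(I-M(s))M^{-1}(s)P(s)=I$, that is $\Rid(s)(M^{-1}(s)-I)P(s)=I$. Therefore $\Rid(s)\bar e=\Rid(s)(M^{-1}(s)-I)P(s)u=u$. Since $\Rid$ belongs to the class by assumption, $R(s;\theta_0)=\Rid(s)$, and so $J_\mathrm{vr}(\theta_0)=0$. Because $J_\mathrm{vr}\geq 0$, $\theta_0$ is a global minimizer.

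The remaining step, which I expect to be the main obstacle, is uniqueness: I need to conclude that $\theta_\mathrm{vr}=\theta_0$ and not merely that $\theta_0$ is one of the minimizers. Writing the cost as $J_\mathrm{vr}(\theta)=\lVert L(s)u-\varphi'\theta\rVert_2^2$, where $\varphi$ is the regressor obtained by filtering $\bar e$ through $L(s)\beta(s)$, the normal equations have a unique solution exactly when the Gram matrix $\int\varphi\varphi'\,dt$ is nonsingular. I would enforce this through the persistent-excitation hypothesis on the input $u$ that the VRFT framework of \cite{Formentin2019} assumes, written in the frequency domain via Parseval as $\int \beta(\jO)\Phi_{\bar e}(\jO)\beta(-\jO)'d\omega>0$. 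I would also note that $\bar e$ is a filtered version of $u$ through the invertible map $(M^{-1}-I)P$, so excitation carries over from $u$ to $\bar e$. Nonsingularity combined with zero attained cost then yields $\theta_\mathrm{vr}=\theta_0$. I would also check that the $\mathcal{H}^p$ regularity needed to differentiate $y$ when applying $M^{-1}(s)$ keeps every signal in $\mathcal{L}_2$, so that all the norms above are finite.
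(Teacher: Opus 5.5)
Your argument is correct and is essentially the standard VRFT consistency proof behind Theorem~1 of \cite{Formentin2019}, which the paper cites without reproducing: the identity $\Rid(s)\left(M^{-1}(s)-I\right)P(s)=I$ gives $J_\mathrm{vr}(\theta_0)=0$, and nonsingularity of $\int \varPhi_\mathrm{v}(\tau)'\varPhi_\mathrm{v}(\tau)\,\mathrm{d}\tau$ then makes $\theta_0$ the unique minimizer. The excitation hypothesis you add is what the paper tacitly assumes when it defines $\theta_\mathrm{vr}$ as the solution of \eqref{eq_mimo_vrft_minimizer}, so stating it explicitly is appropriate; note only that applying $M^{-1}(s)$ to $y$ is justified by the regularity in \Ass{\ref{ass:vrft_uy}}, not by \Ass{\ref{ass:cb}}.
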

By virtue of the previous lemma, the following Proposition holds.
\\
\begin{proposition}\label{prop:d0_vrft}
   Given system \eqref{eq:ssMdd_eq}, if $\theta_\mathrm{vr}=\theta_0$ and assuming zero initial conditions, then
   $
   d_0(t) = 0
   $, and the equivalent control is $\ueq(t)=-d(t),\, \tg$.
\end{proposition}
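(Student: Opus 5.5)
The argument reduces $d_0$ to a difference of two controllers acting on the error, and then invokes Proposition \ref{le:eq_control}. The starting point is the definition in Lemma \ref{le:eq_system}, $d_0(t)=\udd(t)-\uid(t)$. With zero initial conditions, each control action is the output of its LTI controller driven by the corresponding error. Hence $\udd=\Rdd(s)\edd$. Since $\urm$ coincides with $\uid$ under zero initial conditions (Section \ref{sec:nom_dyn}), we also have $\uid=\Rid(s)\eo$. Here $\edd=r-\ymdd$ and $\eo=r-\yo$.

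The next step expresses both controllers in the parametrization of the VRFT class \eqref{eq:R0_vrft_fr}. That class is linear in the parameters, so I would write $\Rdd(s)=\beta(s)'\theta_\mathrm{vr}$ and $\Rid(s)=\beta(s)'\theta_0$ in the appropriate Kronecker or vectorized form of Appendix \ref{app:vrft}. The hypothesis $\theta_\mathrm{vr}=\theta_0$ (guaranteed in the disturbance-free setting by Lemma \ref{lemma:vrft_1}) then gives $\Rdd(s)=\Rid(s)$ as transfer functions. I would also take the realizations $(\Ardd,\Brdd,\Crdd,\Drdd)$ and $(\Arid,\Brid,\Crid,\Drid)$ to be the same, which is harmless because the initial conditions are zero.

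Next, $d_0$ must be shown to vanish, and this is the delicate point. If the sliding mode is enforced, the quickest route is Proposition \ref{prop:d0_dR}: it gives $d_0=(\Rdd(s)-\Rid(s))(I-M(s))r=0$. This relies on sliding mode, which in turn is established via $\bar d_0$, so the argument risks being circular. To avoid this, I would argue directly on the fictitious system \eqref{eq:ssMdd_eq} instead. Since $\Rdd=\Rid$, the closed loop of the plant with $\Rdd$ coincides with the loop that defines $\uid$. Its output then satisfies $\ymdd=\yo$ up to the effect of $d$ and $u_1$. These matched inputs are exactly what $u_1$ compensates, so $\edd=\eo$ whenever $\ymdd=\yo$, and hence $\udd=\uid$, i.e. $d_0\equiv0$. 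I expect this to be the main obstacle: $d_0$ is defined along trajectories that depend on $d_0$ itself. The cleanest fix is to note that $d_0=(\Rdd-\Rid)(\cdot)$ is identically zero as an operator whenever the two controllers coincide and are fed the same signal. One then checks that the two controllers are indeed fed the same error signal, using Lemma \ref{le:sigma_eq} under $\sigmadd=0$.

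Finally, substituting $d_0\equiv0$ into \eqref{eq:u1_eq} of Proposition \ref{le:eq_control} yields $\ueq(t)=-d(t)$ for all $t\ge0$. This recovers the model-based ISMC equivalent control, consistently with Remark \ref{rem:1}.
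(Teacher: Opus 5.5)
Your proposal is correct and follows the paper's proof: $\theta_\mathrm{vr}=\theta_0$ gives $\Rdd(s)=\Rid(s)$, Proposition \ref{prop:d0_dR} then yields $d_0=(\Rdd(s)-\Rid(s))(I-M(s))r=0$, and substituting into \eqref{eq:u1_eq} gives $\ueq=-d$. Your detour to avoid circularity does not actually remove the dependence on the sliding mode, because your final fix again uses $\sigmadd\equiv 0$, which is the hypothesis of Proposition \ref{prop:d0_dR}; the paper takes the sliding mode for granted here, as the notion of equivalent control already does, and the clean form of your remark is that, with zero initial conditions and $\Rdd=\Rid$, Lemma \ref{le:sigma_eq} gives $d_0=\Rid(s)\sigmadd$, which vanishes on any interval where $\sigmadd\equiv 0$.
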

\begin{proof}
Under the assumption that $\theta_\mathrm{vr}=\theta_0$, this implies that $\Rdd(s)=\Rid(s)$.
Then, by virtue of Proposition \ref{prop:d0_dR}, from \eqref{eq:d0_tf}, it results in $d_0(t) = 0$. As a consequence, from \eqref{eq:u1_eq}, $\ueq(t)=-d(t),\, \tg$, which concludes the proof.
\end{proof}

Making reference to \cite{Formentin2019}, when instead $\Rid(s)$ does not belong to the class of controllers in \eqref{eq:R0_vrft_fr}, it is important to notice that, by suitably filtering the experimental data (see \emph{mutatis mutandis} \cite{Formentin2012} for a detailed discussion on the selection of such prefilters), the optimal controller obtained by minimizing \eqref{eq_mimo_vrft_obj} approximately resembles the solution $\theta_\mathrm{mr}$ to Problem \ref{pb:model-ref}, i.e., $\theta_\mathrm{vr} \approx \theta_\mathrm{mr}$. Notice that $\theta_\mathrm{mr}$ represents the  closest parametrization to the realization of the ideal controller $\Rid (s)$. 
It is also worth noticing that, such a result does not provide any information about the value of $d_0$, differently from Proposition \ref{prop:d0_vrft}. However, under the condition that the class of controllers is not too under-parametrized, the condition $\theta_\mathrm{vr} \approx \theta_\mathrm{mr}$ ensures in turn that the controller $\Rdd(s)$ is a valid approximation of the ideal controller $\Rid(s)$, thus implying a moderate amplitude of residual disturbance $d_0$.

Taking now into account the \textit{perturbed setting}, that is $d(t)\neq 0$, the \textit{disturbance-free} results previously discussed can be recovered, as detailed in \cite{Formentin2019}, by building on classical \emph{instrumental variable} methods \cite{soderstrom2002instrumental}.
Hence, the same considerations about the residual disturbance $d_0(t)$ are recovered as well.

The previous considerations do not provide a guess on the bound of the residual disturbance $d_0$. Such a bound is in fact crucial to tune the gain $\rho$ based on \eqref{eq:rho_bound}. Therefore, a possible way to retrieve an estimate of the bound relies on the \emph{data-based open-loop representation} discussed in \cite{de2019formulas}. Specifically, by construction $d_0=\udd-\uid$ and we define $d_\mathrm{0,v}(t) \coloneq \hat{u}_\mathrm{0,v}(t,\theta_\mathrm{vr}) - u(t)$ as the virtual disturbance obtained as the result of the optimization problem \eqref{eq_mimo_vrft_obj}. Moreover, given \eqref{eq:d0_tf}, such a virtual disturbance can be seen at the output of the dynamical system $\Rdd(s)-\Rid(s)$ fed by the virtual error $e_\mathrm{v}(t)$. Let now introduce the vectors of data collected over the time interval $[0,(\bar{T}-1)\tau]$, where $\bar{T}\in \mathbb{N}_{\geq 1}$ represents the total number of samples, and $\tau \in \R_{>0}$ denotes the sampling time, i.e., $
     d_{\mathrm{0,v},[0,(\bar{T}-1)\tau]} = 
    \smat{
        d_\mathrm{0,v}(0)' & d_\mathrm{0,v}(\tau)' & \dots & d_\mathrm{0,v}((\bar{T}-1)\tau)'}'$ and $
     e_{\mathrm{v},[0,(\bar{T}-1)\tau]} = \smat{       e_\mathrm{v}(0)' & e_\mathrm{v}(\tau)' & \dots & e_\mathrm{v}((\bar{T}-1)\tau)'}'$, 
with associated Hankel matrices $D_{\mathrm{0,v},\{0,i,(\bar{T}-i+1)\tau\}}$ and $E_{\mathrm{v},\{0,i,(\bar{T}-i+1)\tau\}}$, respectively.

Under sliding mode conditions, based on \eqref{eq:d0_tf}, the evolution of the disturbance $d_0(t)$ can be seen as the output of the system $\Rdd(s)-\Rid(s)$ fed by $\eo(t)=r(t)-\yo(t)$. Note that, since $\eo(t)$ is the ideal evolution of the error based on the reference model, it is an available signal given the reference profile $r(t)$.
Hence, making reference to \cite[Lemma 2]{de2019formulas}, under the assumption that $e_{\mathrm{v},[0,(\bar{T}-1)\tau]}$ is persistently exciting of a sufficient order, then any input-output trajectory of length $T<\bar{T}$, i.e., $e_{\mathrm{o},[0,(T-1)\tau]}$, $d_{0,[0,(T-1)\tau]}$, can be expressed as
\begin{equation}
\begin{bmatrix}
e_{\mathrm{o},[0,(T-1)\tau]} \\
d_{0,[0,(T-1)\tau]}
\end{bmatrix} = 
\mmatrix[c]{
        E_{\mathrm{v},\{0,T,(\bar{T}-T+1)\tau\}} \\
        \hline
         D_{\mathrm{0,v},\{0,T,(\bar{T}-T+1)\tau\}}        
        }g,
\end{equation}
for some $g \in \R^{\bar{T}-T+1}$. Practically, the value of $g = \bar{g}$ associated with a specific trajectory can be computed as
\begin{equation}
\bar{g} = \left(E_{\mathrm{v},\bar{T},T}'E_{\mathrm{v},\bar{T},T}+\gamma I\right)^{-1}E_{\mathrm{v},\bar{T},T}' e_\mathrm{o}(t),
\end{equation}
where 
$
E_{\mathrm{v},\bar{T},T} 
$
is a shorthand notation for $E_{\mathrm{v},\{0,T,(\bar{T}-T+1)\tau\}}$, and $\gamma \in \R_{\geq 0} $ is a regularization coefficient to deal with the presence of disturbances.
Hence, it is possible to compute a surrogate of the residual disturbance $d_0(t)$ as
\begin{equation}
\label{eq:d0_est}
\hat{d}_0(t) =  D_{\mathrm{0,v},\{0,i,(\bar{T}-i+1)\tau\}} \bar{g},
\end{equation}
thus allowing the designer to retrieve an estimate of the bound needed in \eqref{eq:rho_bound} as 
\begin{equation}
\label{eq:d0_approx}
\bar{d}_0 \approx \sup_{t\geq 0}\{\norm{\hat{d}_0(t)}\}.
\end{equation}

\begin{remark}[ISMC necessity and gain adaptation]
\label{rem:d0}
The technique borrowed by \cite{de2019formulas} does not provide an exact reconstruction of the residual disturbance $d_0(t)$, and it can be useful just for the bound estimation, rather than for a disturbance compensation. Such a compensation is indeed not at all possible, thus making the application of a sliding mode control crucial in any case. Alternatively, a solution that does not require the knowledge of $\bar{d}_0$ can be the design of adaptation mechanisms of the gain $\rho$, as e.g., those in \cite{plestan2010new} . \Endofremark
\end{remark}

\subsection{Data-based tuning of the discontinuous component}\label{sec:db_disc}
In this section, the design of the matrix $K$ adopted in the proposed data-driven control law \eqref{eq:u1_dd}, such that the  condition $(CB)K>0$ holds, is now discussed in detail. Moreover, the design of such a matrix is instrumental to provide an estimation of the bound of the control gain $\rho$ in \eqref{eq:rho_bound}.

Considering system in \eqref{eq:ssPer}, and the first-time derivative of the output $\dot{y}=CAx+CB(u+d)$, under null initial conditions, one has that in $t=0$
\begin{equation}
    \label{eq:ydot_CB_u}
    \dot{y}(0)=CB(u(0)+d(0)).
\end{equation}

Since no disturbances on the output measurement are present, and we can compute the first-time derivative of the output offline, we assume to retrieve $\dot{y}(0)$ with negligible approximation error. From \Ass{\ref{ass:cb}}, multiplying the left and right sides of \eqref{eq:ydot_CB_u} by $(CB)^{-1}$, one has
\begin{equation}
    \label{eq:ydot_CB_inv}
    (CB)^{-1}\dot{y}(0)=u(0)+d(0),
\end{equation}
where $\norm{d(0)}<\bar{d}$, by \Ass{\ref{ass:d}}. Following \cite[Sec. 3]{bisoffi2021petersen}, this assumption implies a disturbance model such that
\begin{equation}
    \label{eq:dist_model_PL}
    d(0)d(0)' \leq \bar{d}^2 I.
\end{equation}
Now, taking $N$ different realizations of the relation in \eqref{eq:ydot_CB_inv}, we define the $i$-th tuple of data $(u^{[i]}(0),\dot{y}^{[i]}(0))$, with $i=1,\dots,N$. Then, following the development in \cite[Sec. 3.1]{bisoffi2021petersen}, we can characterize, for each $i$-th tuple, the following set of matrices $Z' = (CB)^{-1}$ consistent with data, i.e.,
\begin{multline}
        \label{eq:pet_Ci_set}
    \mathcal{C}^{[i]} \coloneqq \left\{ Z': d^{[i]}(0) \in \R^m, \right. \\
    \left. Z'\dot{y}^{[i]}(0)-u^{[i]}(0) = d^{[i]}(0), \;d^{[i]}(0)(d^{[i]}(0))'\leq \bar{d}^2I \right\}.
\end{multline}
Then, set $\mathcal{C}^{[i]}$ in \eqref{eq:pet_Ci_set} can be equivalently written in term of matrix ellipsoids as
\begin{equation}
    \label{eq:pet_Ci_ellips}
    Z'A^{[i]} Z+Z'B^{[i]}+B^{[i]'}Z+C^{[i]} \leq 0,
\end{equation}
where $A^{[i]}=\dot{y}_i(0)(\dot{y}^{[i]}(0))'$, $B^{[i]} = -\dot{y}^{[i]}(0)(u_i(0))'$, and $C^{[i]} = u^{[i]}(0)(u^{[i]}(0))'-\bar{d}^2I$. The set of all matrices $(CB)^{-1}$ consistent with all $N$ data points is then
\begin{equation}
    \label{eq:ellip_inter}
    \mathcal{I}=\bigcap^N_{i=1} \mathcal{C}^{[i]}.
\end{equation}
As discussed in \cite[Sec. 3.1]{bisoffi2021petersen}, the condition for a non-degenerate ellipsoid, i.e., $A^{[i]}>0$, is never satisfied.
However, assuming that \eqref{eq:ellip_inter} is bounded, a computable over-approximation $\bar{\mathcal{I}}$ of the set $\mathcal{I}$ can be obtained following the convex optimization procedure discussed in \cite[Sec. 5.1]{bisoffi2021petersen}, leading to
\begin{equation} 
    \label{eq:I_overapp}
    \bar{\mathcal{I}} = \left\{ Z':\right.
    \left. Z'\bar{A} Z+Z'\bar{B}+\bar{B}Z+\bar{C} \leq 0 \right\}.
\end{equation}
Matrices $\bar{A}>0$, and $\bar{B}$ are obtained by firstly applying the lossy matrix S-procedure 
\begin{equation}
    \label{eq:lossy}
    \begin{bmatrix}
        -I - \sum^N_{i=1}\tau_{i}C_i & \bar{B}'-\sum^N_{i=1}\tau_{i}B_i' &  \bar{B}' \\
        \bar{B}-\sum^N_{i=1}\tau_{i}B_i & \bar{A}-\sum^N_{i=1}\tau_{i}A_i & 0 \\
        \bar{B} & 0 & -\bar{A}
    \end{bmatrix} \leq 0,\; \text{for}\;i =1,\dots,N
\end{equation}
to impose $\mathcal{I}\subseteq\bar{\mathcal{I}}$,  and then by solving the optimization problem 
\begin{equation}
    \label{eq:overapprox_problem}
    \begin{aligned}        
    &\text{minimize}\; -\text{log}\, \left(\text{det} \, \bar{A}\right) \\
    &\text{s.t.}\; \eqref{eq:lossy},\;\bar{A}>0, \; \tau_i\geq 0, \; \text{for}\;i =1,\dots,N, 
    \end{aligned}
\end{equation}
whereas $\bar{C}=\bar{B}'\bar{A}^{-1}\bar{B}-I$.
Then, expression \eqref{eq:I_overapp} can be written in a more suitable formulation, that is,
\begin{equation}
    \label{eq:I_overlapp_Iups}
    \bar{\mathcal{I}}=\left\{\left(\bar{\zeta} + \bar{A}^{-1/2}\Upsilon Q^{1/2}\right)', \; \norm{\Upsilon} \leq 1 \right\},
\end{equation}
where $\bar{\zeta} = -\bar{A}^{-1}\bar{B}$ and $Q=\bar{B}'\bar{A}^{-1}\bar{B}-\bar{C}=I$.

Now, we exploit a classical linear algebra result, that is, given a positive definite matrix, then its inverse is also positive definite {\cite{johnson1970positive}. Hence, the tuning requirement for the matrix $K$ defined in Section \ref{sec:mr_control}, i.e., $(CB)K>0$, can be guaranteed by imposing $((CB)K)^{-1}>0$, which means $K^{-1}Z'>0$. Moreover, such condition can be equivalently achieved by exploiting the symmetric part of matrix $K^{-1}Z'>0$ as 
\begin{equation}
    \label{eq:K_cond_pet}
    -(WZ' + ZW') < 0,
\end{equation}
where $W=K^{-1}$ is the matrix inverse of the gain matrix $K$. Now, given the over-approximation $\bar{\mathcal{I}}$ of the set $\mathcal{I}$ of matrices $Z'$ consistent with data, the objective becomes to guarantee condition \eqref{eq:K_cond_pet} for any matrix $Z'=(CB)^{-1}\in \bar{\mathcal{I}}$, that is
\begin{equation}
    \label{eq:K_robust_problem}
    \begin{aligned}        
    &\text{find}\; W \\
    &\text{s.t.}\; -WZ' - ZW' < 0,\; \forall Z'\in \bar{\mathcal{I}}.
    \end{aligned}
\end{equation}
By substituting expression \eqref{eq:I_overlapp_Iups} in \eqref{eq:K_robust_problem}, it can be rewritten as
\begin{equation}
    \label{eq:K_robust_problem_2}
    \begin{aligned}        
    &\text{find}\; W \\
    &\text{s.t.}\; -W(\bar{\zeta} + \bar{A}^{-1/2}\Upsilon)' - (\bar{\zeta} + \bar{A}^{-1/2}\Upsilon)W' < 0, \; \norm{\Upsilon}\leq 1.
    \end{aligned}
\end{equation}
The constraint in \eqref{eq:K_robust_problem_2} can be manipulated as 
\begin{equation}
    \label{eq:K_robust_constr}
    -W\bar{\zeta}'-\bar{\zeta}W' -W\Upsilon'\bar{A}^{-1/2}-\bar{A}^{-1/2}\Upsilon W'< 0, \; \forall\, \Upsilon: \Upsilon'\Upsilon \leq I.
\end{equation}
This formulation enables the application of the strict Petersen’s lemma in \cite[Fact 1]{bisoffi2022data}, which means that \eqref{eq:K_robust_constr} is satisfied if and only if there exits $\lambda > 0$ such that 
\begin{equation}
    \label{eq:K_robust_constr_pet_lemma}
    -(W\bar{\zeta}'+\bar{\zeta}W') + \lambda^{-1}WW'+ \lambda \bar{A}^{-1} < 0.
\end{equation}
By applying a Schur complement, problem \eqref{eq:K_robust_problem} is actually equivalent to
\begin{equation}
\label{eq:K_robust_problem_schur}
    \begin{aligned}        
    &\text{find}\; W,\;\lambda >0 \\
    &\text{s.t.}\; 
    \begin{bmatrix}
        -(W\bar{\zeta}'+\bar{\zeta}W')+\lambda \bar{A}^{-1}  & \star\\
        W  & -\lambda I  \\
    \end{bmatrix}<0,
    \end{aligned}
\end{equation}
where the gain matrix $K$ is finally obtained as $K = W^{-1}$. We conclude this design procedure with the following proposition.
\begin{proposition}
    \label{lemma:pet}
    Given the system \eqref{eq:ssPer} satisfying assumptions \Ass{\ref{ass:d},\ref{ass:cb}}, the matrix $K$ obtained solving \eqref{eq:K_robust_problem_schur} is such that for the true matrix $CB$ it holds $(CB)K > 0$. 
\end{proposition}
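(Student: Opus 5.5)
The argument chains the set inclusions built in the design procedure with the equivalences derived afterwards. First, by \Ass{\ref{ass:cb}} the true matrix $Z_\star'\coloneqq(CB)^{-1}$ exists. By \Ass{\ref{ass:d}} and \eqref{eq:dist_model_PL}, every data tuple $(u^{[i]}(0),\dot{y}^{[i]}(0))$ satisfies \eqref{eq:ydot_CB_inv} with some $d^{[i]}(0)$ obeying the quadratic bound. Hence $Z_\star'\in\mathcal{C}^{[i]}$ for every $i$, so $Z_\star'\in\mathcal{I}$. The constraint \eqref{eq:lossy} in \eqref{eq:overapprox_problem} enforces $\mathcal{I}\subseteq\bar{\mathcal{I}}$ by the lossy S-procedure, so $Z_\star'\in\bar{\mathcal{I}}$. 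Via the reparametrization \eqref{eq:I_overlapp_Iups}, this means $Z_\star'=(\bar{\zeta}+\bar{A}^{-1/2}\Upsilon_\star Q^{1/2})'$ for some $\Upsilon_\star$ with $\norm{\Upsilon_\star}\leq 1$, where $Q=I$. Deriving \eqref{eq:I_overlapp_Iups} from \eqref{eq:I_overapp} is a standard completion of squares using $\bar{A}>0$ and $\bar{C}=\bar{B}'\bar{A}^{-1}\bar{B}-I$, and I would recall it only briefly.

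Second, I would go backwards through the LMI chain. Suppose $(W,\lambda)$ with $\lambda>0$ is feasible for \eqref{eq:K_robust_problem_schur}. The lower-right block $-\lambda I<0$ permits a Schur complement, which gives exactly \eqref{eq:K_robust_constr_pet_lemma}. For the direction needed here, only the ``if'' part of the strict Petersen lemma \cite[Fact 1]{bisoffi2022data} is required. One can also verify it directly: for any $\Upsilon$ with $\Upsilon'\Upsilon\leq I$, Young's inequality gives
\[
-W\Upsilon'\bar{A}^{-1/2}-\bar{A}^{-1/2}\Upsilon W'\leq \lambda^{-1}WW'+\lambda\bar{A}^{-1/2}\Upsilon\Upsilon'\bar{A}^{-1/2}\leq \lambda^{-1}WW'+\lambda\bar{A}^{-1}.
\]
This inequality uses $\Upsilon\Upsilon'\leq I$, which follows from $\norm{\Upsilon}\leq 1$. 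Therefore \eqref{eq:K_robust_constr} holds uniformly over the ball, and in particular at $\Upsilon_\star$. That yields $WZ_\star'+Z_\star W'>0$, i.e. the symmetric part of $WZ_\star'$ is positive definite.

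Third, I would translate this into the claim. Feasibility forces $W$ to be invertible: if $W'v=0$ for some $v\neq0$, then $v'(WZ_\star'+Z_\star W')v=0$, which is a contradiction. So $K=W^{-1}$ is well defined. Since $WZ_\star'=K^{-1}(CB)^{-1}=((CB)K)^{-1}$ has positive definite symmetric part, it is positive definite in the paper's (non-symmetric) sense. By the cited fact \cite{johnson1970positive}, its inverse is positive definite as well. One can check this directly: with $y=M^{-1}x$, we have $x'M^{-1}x=y'M'y=y'My>0$. This gives $(CB)K>0$.

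The main obstacle is the first step, namely certifying $Z_\star'\in\bar{\mathcal{I}}$. This relies on the S-procedure LMI \eqref{eq:lossy} genuinely implying the inclusion, which in turn uses the boundedness of $\mathcal{I}$ assumed in the text. It also relies on the data relation \eqref{eq:ydot_CB_u} holding exactly, i.e. zero initial state and a derivative $\dot{y}(0)$ recovered without error. I would state these as the standing hypotheses inherited from the design procedure, cite \cite[Sec. 5.1]{bisoffi2021petersen} for the inclusion, and note that the remaining steps are purely algebraic.
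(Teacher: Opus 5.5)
Your proposal is correct and follows the paper's own proof. The chain is the same: the true $(CB)^{-1}$ is consistent with every data tuple, hence lies in $\mathcal{I}\subseteq\bar{\mathcal{I}}$; the LMI \eqref{eq:K_robust_problem_schur} certifies $K^{-1}Z'>0$ on all of $\bar{\mathcal{I}}$; and inverting a positive definite matrix gives $(CB)K>0$. You additionally spell out what the paper leaves implicit: the sufficiency half of Petersen's lemma via Young's inequality, the invertibility of $W$, and the inverse-of-positive-definite fact.

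One small caveat concerns your claim that the Schur complement ``gives exactly'' \eqref{eq:K_robust_constr_pet_lemma}. As typeset, \eqref{eq:K_robust_problem_schur} has $W$ in the lower-left block, whose Schur complement yields $\lambda^{-1}W'W$ rather than $\lambda^{-1}WW'$. Your claim, like the paper's, therefore holds only for the intended reading with $W'$ in that block.
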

\begin{proof}
Given \Ass{\ref{ass:d}}, for all $N$ tuples of data, the actual disturbance realization $d^{[i]}(0)$ is precisely equivalent to the true matrix $(CB)^{-1}$, meaning that the latter is consistent with the disturbance model, thus belonging to each set $\mathcal{C}^{[i]}$. Then, since $\bar{\mathcal{I}}$ is an over-approximation of the set $\mathcal{I}$ defined in \eqref{eq:ellip_inter} as the intersection of all sets $\mathcal{C}^{[i]}$, this implies that the true matrix $(CB)^{-1} \in \bar{\mathcal{I}}$. Then, given that the solution to problem \eqref{eq:K_robust_problem_schur} guarantees that $K^{-1}Z'>0 $, $\forall\,Z:Z'\in \bar{\mathcal{I}}$, this holds also for the true matrix $Z'=(CB)^{-1}$. Hence, recalling that the inverse of a positive definite matrix is positive definite as well, it holds also that $(CB)K>0$ for the true matrix $CB$, which concludes the proof.
\end{proof}

We are now in a position to discuss how to select the gain $\rho$ in \eqref{eq:u1_dd}. Making reference to Theorem \ref{th:sm_conv}, condition \eqref{eq:rho_bound} can not be verified due to the lack of knowledge of $CB$ and $\bar{d}_0$, whereas the only known quantities are $\bar{d}$ from \Ass{\ref{ass:d}}, and $K$ as output of the optimization problem \eqref{eq:K_robust_problem_schur}. However, one can exploit the estimate provided in \eqref{eq:d0_approx} for the bound $\bar{d}_0$. As for $CB$, knowing that $\bar{\zeta}$ is the centroid of the over-approximated ellipsoid in \eqref{eq:I_overlapp_Iups} containing $((CB)^{-1})'$, corresponding to its least-squares estimate \cite[Sec. 4.2]{bisoffi2022data}, one can exploit $\overline{CB}=(\bar{\zeta}')^{-1}$ in place of $CB$. Then, the gain $\rho$ can be chosen such that $\rho>\rho_0$ with
\begin{equation}
    \label{eq:rho_est}
    \rho_0 = \frac{\sqrt{\overline{\lambda}\left(\overline{CB}'\overline{CB}\right)}}{\underline{\lambda}\left(\overline{CB} K\right)}\left( \bar{d}+ \bar{d}_0\right).
\end{equation}
It is worth highlighting that, although no knowledge of the system is available, apart from \Ass{\ref{ass:d},\ref{ass:cb}}, \eqref{eq:rho_est} provides a valid practice to satisfy the condition of Theorem \ref{th:sm_conv}.

\section{Numerical and Experimental Results}
\label{sec:num_exp_results}
In this section, the proposed DD-ISMC algorithm is assessed relying on a numerical example, and experimentally on a more realistic case study to stabilize the pitch motion of a Quanser 2-DoF helicopter\footnote{Technical details: \url{https://www.quanser.com/resource-type/technical-resources/?_products=5666}}.

\subsection{Numerical example}
\label{sec:num_example}
Consider a plant capturing the dynamics of a triple-tank system inspired from \cite{eckhard2018virtual}, and illustrated in Fig. \ref{fig:tanks}.
\begin{figure}[!ht]
        \centering
        \includegraphics[width=0.65\columnwidth]{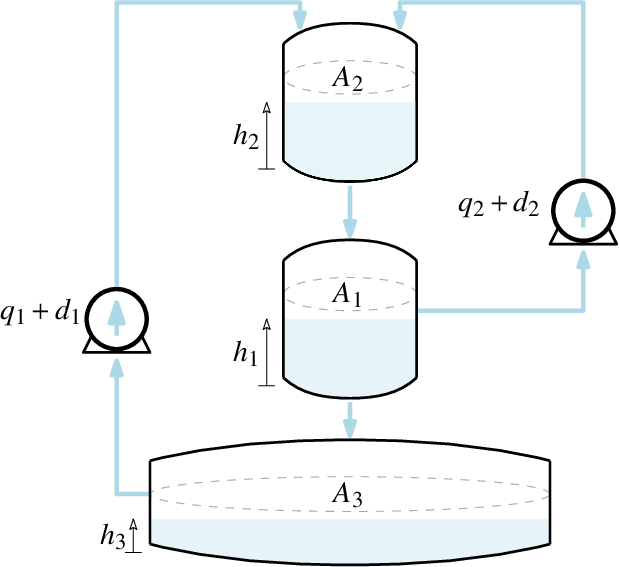}
        \caption{Schematic rendering of the triple tanks system.}
        \label{fig:tanks}
\end{figure}

Specifically, $h_1$, $h_2$ and $h_3$ are the states representing the levels of the three tanks, with corresponding cross sections $A_1,\,A_2,\,A_3$, respectively. Among the states, $h_1$ and $h_2$ correspond to the outputs, whereas the inputs are the volumetric flow rates $q_1$ and $q_2$, perturbed by the matched disturbances $d_1$ and $d_2$. 

Making reference to the plant dynamics $\mP_d$ as in \eqref{eq:ssPer}, the equations of the triple-tank system are given by
\begin{equation}\label{eq:3tanks}
\mP_d:
\begin{cases}
\dot{x}(t)\! = \!\! \begin{bmatrix}
    -\frac{k_1}{A_1} & \frac{k_2}{A_1}  & 0\\
          0          & -\frac{k_2}{A_2} & 0\\
    \frac{k_1}{A_3}  &  0               & 0
\end{bmatrix}\!x(t)\! +\! \begin{bmatrix}
    0             & -\frac{1}{A_1}\\
    \frac{1}{A_2} &  \frac{1}{A_2}\\
   -\frac{1}{A_3} & 0
\end{bmatrix}\!\!(u(t)+d(t))\\
y(t) \!=\!\! \begin{bmatrix}
    1 & 0 & 0\\
    0 & 1 & 0
\end{bmatrix}x(t),
\end{cases}
\end{equation}
where $x\coloneqq\smat{h_1 & h_2 & h_3}'$, $u\coloneqq\smat{q_1 & q_2}'$, the valve coefficients $k_1=0.008$ \si{\square\meter\per\second}, $k_2=0.0035$ \si{\square\meter\per\second}, $A_1=0.16$ \si{\square\meter}, $A_2=0.09$ \si{\square\meter} and $A_3=2.25$ \si{\square\meter}.
Note that a regulation of the output with respect to the constant levels $\bar{h}_1=0.25$ \si{\meter} and $\bar{h}_2=0.3$ \si{\meter} are considered, the corresponding  constant flow rates of which are $\bar{q}_1=0.002$ \si{\meter\cubed\per\second} and $\bar{q}_2=-0.0009$ \si{\meter\cubed\per\second}. The disturbance terms are instead given by $d_1(t)=0.0002\sin(2\pi f_1 t-\varphi_1)$ and $d_2(t)=0.0002\sin(2\pi f_2 t-\varphi_2)$, with $f_1,\,f_2$, $\varphi_1,\,\varphi_2$ being sampled for each test from an uniform distribution in the intervals $[0.005,0.05]$ \si{\hertz}, $[0,2\pi]$ \si{\radian}, respectively. The related bound is therefore $\bar{d}=0.000283$ \si{\meter\cubed\per\second}.
It is worth highlighting that the actual tuple $(A,B,C)$ is unknown, and the above parameters are indicated to emulate in simulation the real system, the model of which is assumed hidden to the control designer. 

Taking into account the design practices in Section \ref{sec:R0_vrft}, the VRFT approach is adopted to design the data-driven controller. Specifically, we performed a suitable data collection phase on the open-loop system \eqref{eq:3tanks} using the PRBS input, with amplitude \SI{0.0005}{\meter\cubed\per\second}.
The results of such a data collection phase are shown in Fig. \ref{fig:tanksPRBS}, where the employed PRBS input together with the disturbance evolution is reported on the left, while the output levels on the right.
\begin{figure*}[!t]
        \centering
        \includegraphics[width=\textwidth]{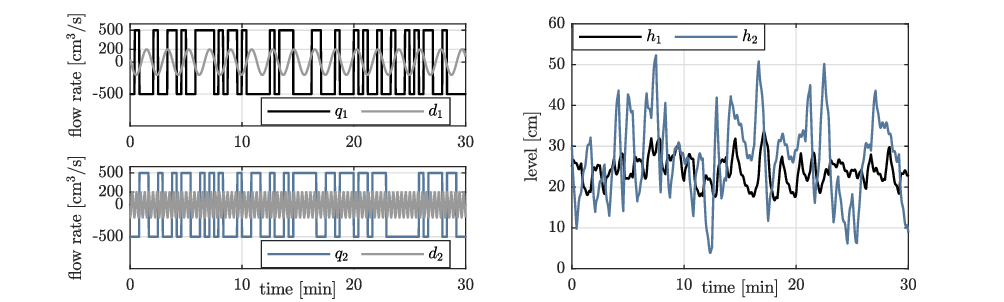}
        \caption{Open-loop input-output data collection. The PRBS input signals (plots on the left) $q_1$ (black line) and $q_2$ (blue line)  shifted by the corresponding equilibrium values $\bar{q}_1$ and $\bar{q}_2$ are fed into the plant affected by the matched disturbances $d_1$ and $d_2$ (gray lines). The corresponding output levels (plot on the right) are $h_1$ (black line) and $h_2$ (blue line).}
        \label{fig:tanksPRBS}
\end{figure*}
\begin{figure}[!t]
        \centering
        \includegraphics[width=\columnwidth]{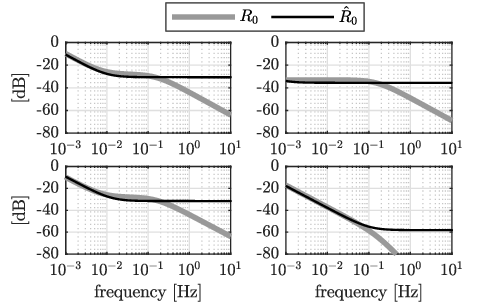}
        \caption{Bode diagrams of the magnitude associated with the frequency response of the ideal regulator $R_0(s)$ (black line) and the VRFT based controller $\hat{R}_0(s)$ (gray line) with all their components $R_{0,ij}(s)$ and $\hat{R}_{0,ij}(s)$, $i,j=1,2$, respectively.}
        \label{fig:tanksBode}
\end{figure}
\begin{figure}[!t]
        \centering
        \includegraphics[width=0.9\columnwidth]{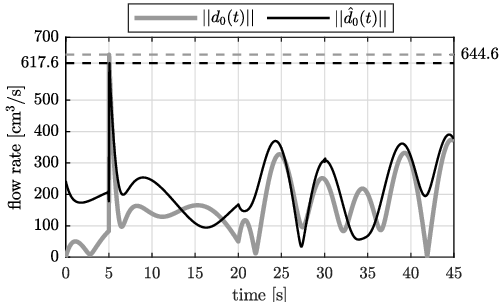}
        \caption{Estimation of the bound of the residual disturbance $d_0$. The time evolution of the norm  associated with the real residual disturbance (gray line), $\norm{d_0(t)}$, is compared with the one (black line) related to the estimation in \eqref{eq:d0_est}, $\norm{\hat{d}_0(t)}$, so that the available bound is $\bar{d}_0$ in \eqref{eq:d0_approx} (dashed black line), close to the real one (dashed gray line) which is not available in practice (see Remark \ref{rem:d0}).}
        \label{fig:tanksd0}
\end{figure}

The selected reference model for the application is given by
\begin{equation}
\label{eq:num_sim_mr}
M(s) = \begin{bmatrix}
    \frac{1}{(1+2s)^2} & 0\\
    0 & \frac{1}{(1+2s)^2}
\end{bmatrix}.
\end{equation}
As for the class of controllers, we select proportional-integral ones, which, following the expression in \eqref{eq:mimo_R_tf}, are given by
\begin{equation}
    \label{eq:sim_R0_class}
    \Rdd(s,\theta) = \begin{bmatrix}
    \frac{\theta_0 s + \theta_1}{s} & \frac{\theta_2 s + \theta_3}{s}\\
    \frac{\theta_4 s + \theta_5}{s} & \frac{\theta_6 s + \theta_7}{s}
    \end{bmatrix}=\frac{1}{s}\begin{bmatrix}
    \theta_0 s + \theta_1 & \theta_2 s + \theta_3\\
    \theta_4 s + \theta_5 & \theta_6 s + \theta_7
    \end{bmatrix},
\end{equation}
with $\theta = \smat{\theta_0 & \theta_1 & \theta_2 & \theta_3 & \theta_4 & \theta_5 & \theta_6 & \theta_7}'$ being the vector of tunable parameters. Following the VRFT procedure in Appendix \ref{app:vrft} the optimal parameters are $$\theta_\mathrm{vr}=\smat{0.029 & 0.0018 & 0.0167 & 0.00007 & -0.0265 & -0.0021 & -0.0012 & 0.0008}'.$$
Fig. \ref{fig:tanksBode} shows the Bode diagrams of the magnitude associated with the frequency response of the VRFT based controller $\hat{R}_0(s)$ (with all its components $\hat{R}_{0,ij}(s)$, $i,j=1,2$) compared with those of the components of the ideal regulator $R_0(s)$, computed from the reference model $M(s)$ and the plant transfer function $P(s)$ from \eqref{eq:3tanks} as $R_0(s) = (P^{-1}(s)M(s))(I-M(s))^{-1}$. It is apparent that, although the ideal controller $R_0(s)$ does not belong to the class of the VRFT controllers in \eqref{eq:sim_R0_class}, the VFRT procedure provides a satisfactory match between the two, particularly at low frequencies.

In order to design the SMC component, as discussed in Section \ref{sec:db_disc}, the tuning of the control gain $\rho$ and the matrix $K$ are needed. Specifically, the former depends on the bound of the residual disturbance $d_0$. Recalling that this value is associated with the closed-loop evolution of the system given the specific reference signal $r$, we selected for this analysis 
\begin{align*}
   & r_1(t)=0.25-0.015\mathrm{step}(t-5)\\
    &\quad+0.01\sin(2\pi0.1(t-20))\mathrm{step}(t-20)\text{ \si{\meter} and }\\
   & r_2(t)=0.3+0.02\sin(2\pi0.05t) \text{ \si{\meter}.}
\end{align*}
According to the procedure in Section \ref{sec:R0_vrft}, relying on \eqref{eq:d0_est}, the bound in \eqref{eq:d0_approx} is found equal to $\bar{d}_0=0.0006446$ \si{\meter\cubed\per\second} (see Fig. \ref{fig:tanksd0} where the time evolution of the norm of the real residual disturbance $\norm{d_0(t)}$ and that of the estimated one $\norm{\hat{d}_0(t)}$ are illustrated).
To find the matrix $K$, relying on the theory in Section \ref{sec:db_disc}, we performed 100 simulations with random constant inputs $q_1$ and $q_2$, and related random disturbance realizations around the constant pair values $(\bar{h}_1,\bar{h}_2)$ and $(\bar{q}_1,\bar{q}_2)$, from which the first-time derivative of the output in $t=0$ has been numerically computed. Then, from \eqref{eq:pet_Ci_ellips}--\eqref{eq:K_robust_problem_schur}, the achieved value of the matrix is
$$
K = \begin{bmatrix}
    3.9665  & 5.1174\\
    -6.4973 & 4.5234
\end{bmatrix}
$$
such that $(CB)K>0$. Moreover, as discussed in Section \ref{sec:db_disc}, the estimate of the product $CB$ is computed as
$$
\overline{CB} = 
\begin{bmatrix}
    -0.2506  & -8.8468\\
    17.7411 & 15.3833
\end{bmatrix}.
$$
Given $\bar{d}$, $\bar{d}_0$, $K$ and $\overline{CB}$, the value of the control gain as in \eqref{eq:rho_est} is $\rho=0.000477$ \si{\meter\cubed\per\second}.

\begin{figure*}[!t]
        \centering
        \includegraphics[width=0.9\textwidth]{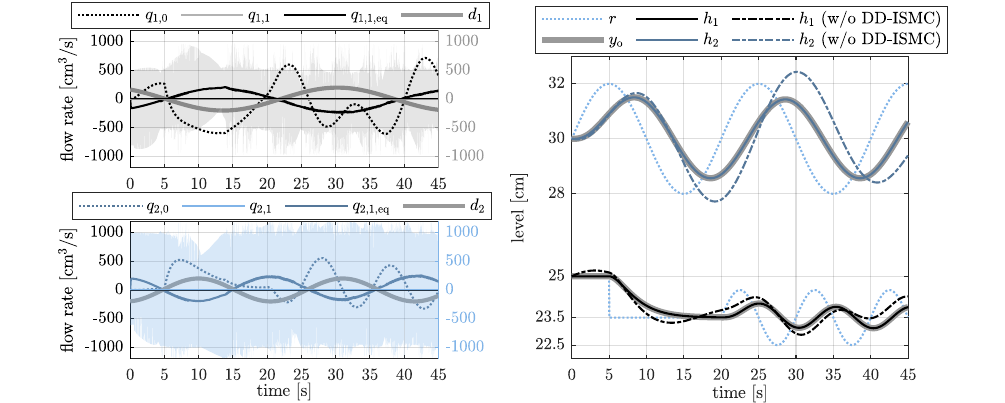}
        \caption{Closed-loop simulations when the ideal controller $R_0(s)$ belongs to the class of controllers employed in the VRFT approach (see Lemma \ref{lemma:vrft_1}). The ideal control signals $q_{1,0}$ (dotted black line) and  $q_{2,0}$ (dotted blue line) and the related discontinuous components $q_{1,1}$ (gray line) and $q_{2,1}$ (light blue line) summed up are fed into the plant. The equivalent control signals $q_{1,1,\text{eq}}$ (black line) and $q_{2,1,\text{eq}}$ (blue line) perfectly compensate the disturbances $d_1$ and $d_2$ (bold gray lines); see Proposition \ref{prop:d0_vrft} (plots on the left). Given the reference signals $r$ (dotted light blue lines), the plant outputs $h_1$ (black line) and $h_2$ (blue lines) perfectly match the outputs $\yo$ of the reference model (bold gray lines); see Proposition \ref{le:sm}. Excluding the SMC components, the outputs $h_1$ (dashed-dotted black line) and $h_2$ (dashed-dotted blue line) do not match the reference model.}
        \label{fig:tanksR0}
\end{figure*}
\begin{figure*}[!t]
        \centering
        \includegraphics[width=0.9\textwidth]{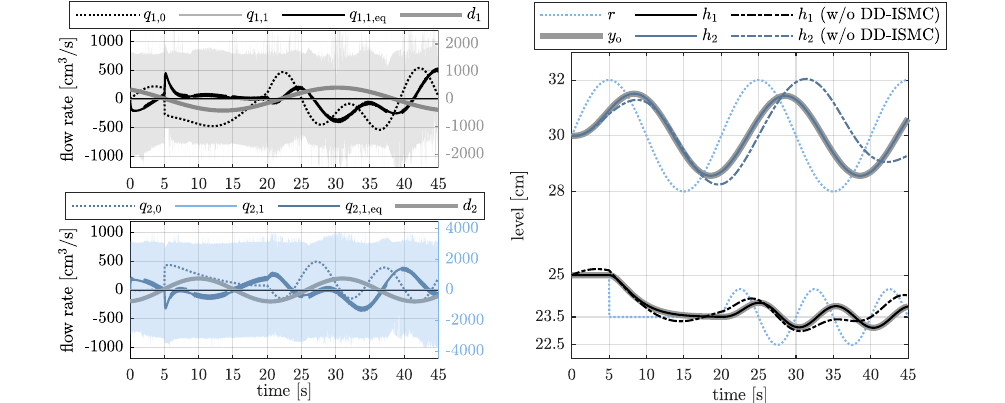}
        \caption{Closed-loop simulations when the ideal controller $R_0(s)$ does not belongs to the class of controllers employed in the VRFT approach. The ideal control signals $q_{1,0}$ (dotted black line) and  $q_{2,0}$ (dotted blue line) and the related discontinuous components $q_{1,1}$ (gray line) and $q_{2,1}$ (light blue line) summed up are fed into the plant. The equivalent control signals $q_{1,1,\text{eq}}$ (black line) and $q_{2,1,\text{eq}}$ (blue line) compensate the disturbances $d_1$ and $d_2$ (bold gray lines), together the residual disturbance $d_0$ generated by the VRFT controller $\hat{R}_0(s)$ (plots on the left). Given the reference signals $r$ (dotted light blue lines), the plant outputs $h_1$ (black line) and $h_2$ (blue lines) still perfectly match the outputs $\yo$ of the reference model (bold gray lines); see Proposition \ref{le:sm}. Excluding the SMC components, the outputs $h_1$ (dashed-dotted black line) and $h_2$ (dashed-dotted blue line) do not match the reference model.}
        \label{fig:tanksR0hat}
\end{figure*}
Fig. \ref{fig:tanksR0} and Fig. \ref{fig:tanksR0hat} illustrate the outcomes of the simulation tests. More precisely, a first test was carried out relying on the proposed DD-ISMC when the ideal controller $R_0(s)$ is used. We remark that the usage of such a controller has just a theoretical meaning rather than a practical relevance to assess the main result of the paper. One can observe that the plant outputs $h_1$ and $h_2$ match the reference model output $\yo$ even in presence of disturbances. The corresponding components  of the inputs $q_{1,1},\,q_{2,1},\,q_{1,0}$ and $q_{2,0}$  are also reported together with the equivalent control signals $q_{1,1,\text{eq}}$ and $q_{2,1,\text{eq}}$ (computed as output of a low-pass filter with suitable time constant, see \cite[Ch. 2]{incremona2019smc}). These two signals, as expected from Proposition \ref{prop:d0_vrft}, perfectly compensate only the disturbances $d_1$ and $d_2$, since the residual disturbance is $d_0=0$, thus recovering the results achievable with the model-based ISMC \cite{utkin1996ism}; see Remark \ref{rem:1}.  

The second test relies instead on the VRFT controller \eqref{eq:sim_R0_class} previously described. Also in this case the plant outputs $h_1$ and $h_2$ match the reference model output $\yo$ even in presence of disturbances. The corresponding components  of the inputs $q_{1,1},\,q_{2,1},\,q_{1,0}$ and $q_{2,0}$ are shown, but this time the equivalent control signals $q_{1,1,\text{eq}}$ and $q_{2,1,\text{eq}}$ compensate not only the disturbances $d_1$ and $d_2$ but also the residual disturbance $d_0$ generated by the mismatch between the VRFT controller and the ideal one to achieve the reference model. One can note that the effects of such a mismatch in the peak values of the equivalent control signals $q_{1,1,\text{eq}}$ and $q_{2,1,\text{eq}}$ at $t=5$ \si{\second} justified by the difference at high frequencies between the two controllers, highlighted in Fig. \ref{fig:tanksBode}. These results satisfactory confirm the developed theory.

\subsection{Experimental validation}
\label{sec:exp_val}
Now, we test the proposed control architecture on an experimental setup, that is the Quanser Aero laboratory apparatus shown in Fig.~\ref{fig:setup}.
\begin{figure}[!ht]
    \centering
    \includegraphics[width=0.9\columnwidth]{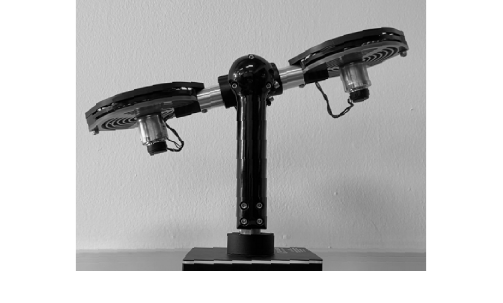}
    \caption{Quanser Aero helicopter used as the experimental setup, taking the two rotor axes parallel configuration.}
    \label{fig:setup}
\end{figure}

The Quanser Aero is a fixed-base twin-rotor helicopter emulator with two degrees of freedom (pitch and yaw) consisting of two rotors taking either a parallel axis configuration (i.e., both rotors with vertical axis at the horizontal rest position) or an orthogonal axis configuration (i.e., one rotor vertical and one horizontal at the same position).
In the considered setup, the two rotor axes are parallel (vertical at rest, as shown in Fig.~\ref{fig:setup}) and the yaw motion around the vertical axis is blocked. Specifically, we consider the special case of a SISO system, where the controlled output $y$ is the pitch rate $\dot\vartheta$, estimated based on the encoder counts, and the control input $u$ is the \emph{differential} voltage $v$ of the motor, i.e., one motor is fed with $+0.5v$ and the other with $-0.5v$, so that both voltages remain within the admissible range of $[-20\,\text{V},20\,\text{V}]$. We note that the pitch rate was selected as the controlled variable, rather than the pitch angle, because \Ass{\ref{ass:cb}} implies a relative-degree-one requirement in the SISO case. Nevertheless, the presence of non-negligible propeller dynamics increases the effective relative degree, meaning that the proposed theoretical properties hold only approximately. We also emphasize that choosing the pitch rate is not restrictive, since an outer loop regulating the actual pitch angle can be readily tuned based on the imposed reference model. The sampling time of the system is equal to $\tau=0.001$ s.

As discussed in Section \ref{sec:R0_vrft}, and analogously to Section \ref{sec:num_example}, the VRFT approach can be used to tune the controller $\Rdd(s)$ based on a desired reference model. To this end, we performed a suitable open-loop experiment on the Quanser Aero exploiting the PRBS input, with amplitude \SI{20}{\volt}, limiting the bandwidth to $10$ Hz to cope with the presence of static friction. The results of the data collection experiment are shown in Fig. \ref{fig:open_loop_exp}, where the employed PRBS input is reported on the top, while the obtained pitch rate profile on the bottom.
\begin{figure}[!ht]
    \centering
    \includegraphics[width=0.85\columnwidth]{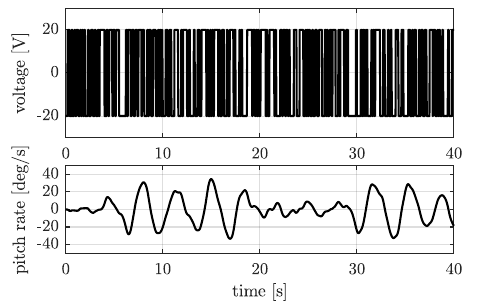}
    \caption{
    Open-loop experiments for input-output data collection. The PRBS \emph{differential} motor input voltage  is fed into the plant (plot on the top). The corresponding output is the pitch rate (plot on the bottom).}
    \label{fig:open_loop_exp}
\end{figure}

\begin{figure*}[t!]
    \centering
    \includegraphics[width=0.9\textwidth]{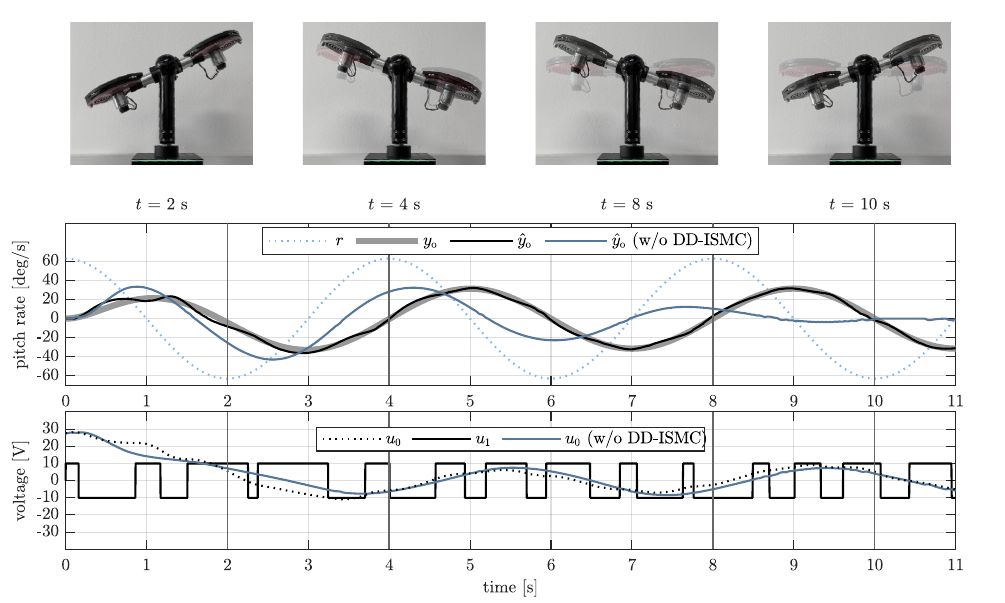}
    \caption{Closed-loop experiments when the proposed DD-ISMC scheme is applied to the Quanser Aero helicopter.  Given the reference signal $r$ (dotted light blue line), the plant output $\dot{\vartheta}$ (black line) practically tracks the output $\yo$ of the reference model (bold gray line). Excluding the SMC components, the output $\dot{\vartheta}$ (blue line) does not match the reference model. This behavior is apparent in the four snapshots at the time instants $t\in\{2,4,8,10\}$ s, where the motion generated by the DD-ISMC is overlapped to that given when the SMC component of the control law is discarded (cut-out frames). The ideal control signal $v_{0}$ (dotted black line) and the related discontinuous components $v_{1}$ (black line) summed up are fed into the plant. The corresponding ideal control signal $v_{0}$ (blue line) is also reported, when the SMC component is discarded.}
    \label{fig:exp_frames}
\end{figure*}

 We select a sensible reference model for the application, namely the second-order low-pass filter with unitary gain
\begin{equation}
    \label{eq:num_ex_mr}
    M(s) = \frac{1}{\left(1+\frac{s}{2\pi f_{\text{m}}}\right)^2}, 
\end{equation}
with $f_{\text{m}} = 0.25$ Hz. With regard to the class of controllers, inspired by prior knowledge about the system, we select
\begin{equation}
    \label{eq:exp_R0_class}
    \Rdd(s,\theta) = \theta_0 + \frac{\theta_1}{s} + \frac{\theta_2}{s^2}=\frac{\theta_0 s^2 + \theta_1 s + \theta_2}{s^2},
\end{equation}
with $\theta = \smat{\theta_0 & \theta_1 & \theta_2}'$ being the vector of tunable parameters. 
The outcome of the VRFT optimization (applying the procedure in Appendix \ref{app:vrft} recast to the SISO system framework) are the parameters $\theta_\mathrm{vr}=\smat{0.439 & 0.098 & 1.287}'$.

As for the SMC component, designed as in \eqref{eq:u1_dd}, it is worth noticing that the matrices $CB$ and $K$ are scalar in the SISO case and $K$ can be hence embedded in the control gain $\rho$. Moreover, taking into account the saturation limit of the input imposed by the adopted platform, in order to leave a certain margin of action to the ideal control component, the value of the SMC gain is selected equal to $\rho=10$ V. The reference signal is instead selected as $r(t)=20\pi\cos(0.5\pi t)$.

Fig. \ref{fig:exp_frames} shows the outcomes of the experiments. More precisely, a first experiment was carried out relying only on the ideal VRFT-based control component $v_0$, without the SMC law $v_1$. As it is apparent, the plant output $\dot\vartheta$ does not match the reference model output $\yo$ due to the effect of the effective disturbances and unavoidable modeling uncertainty (e.g., friction phenomena) affecting the laboratory setup. The second experiment was performed applying the proposed DD-ISMC scheme. As expected, the achieved results outperform the ones without SMC component. More precisely, the plant output follows in practical sense the output of the reference model, i.e., $\dot\vartheta$ evolves in a small neighborhood of the reference model signal $\yo$. This time evolution is consistent with that of the SMC signal $v_1$, which switches between $-10$ V and $+10$ V at a finite frequency like in practical SMC, due to the fact that the actual relative degree is not exactly equal to 1, for the reasons explained above. Nevertheless, this outcome must be interpreted as a promising result confirming the robustness of the proposed approach, highlighting its applicability in field implementations where the relative degree is typically not precisely known and can be potentially different from 1 because of unknown hidden dynamics.

\section{Conclusions}\label{sec:conclusions}
This paper has introduced a comprehensive methodology to design robust data-driven model-reference controllers via sliding mode generation. A DD-ISMC approach has been designed, capable of handling generic multivariable linear systems, the model of which is completely unknown and affected by matched disturbances. The main theoretical results are related to: (i) the design of an integral sliding variable depending only on the reference model to reproduce it in closed-loop (Lemma \ref{le:sigma_eq}, Proposition \ref{le:sm}); (ii) the determination of a residual disturbance, as an effect to the used data-driven model-reference controller, and of the equivalent controlled system (Lemma \ref{le:eq_system} and Proposition \ref{le:eq_control}); (iii) the determination  of the conditions required to guarantee the stability of the closed-loop system in sliding mode (Theorem \ref{th:stab}); (iv) the proof of the existence of an integral sliding mode (Theorem \ref{th:sm_conv}); and (v) the characterization of the residual disturbance (Proposition \ref{prop:d0_dR}). Furthermore, for the sake of practical implementation, a detailed discussion of the main practices has been reported to provide procedures for the design of a model-reference control relying on a VRFT approach, and for the tuning of the sliding mode control matrix and gain starting from collected data. Finally, simulation and experimental results illustrated in the paper have confirmed the theoretical analysis, showing the enforcement of an integral sliding mode, thus of the reference model in closed-loop, despite the presence of the matched disturbance and the lack of knowledge on the plant dynamics.


\appendix
{\noindent\normalfont\normalsize\bfseries\appendixname}
\section{Continuous-time MIMO VRFT}\label{app:vrft}
This section presents the main steps to design the VRFT approach in the continuous-time MIMO case. It is worth highlighting that, deriving such a formulation newly extends the one presented in \cite{Formentin2019} for the SISO case.

Considering system \eqref{eq:ssNom}, with the tuple $(A,B,C)$ fully unknown, the plant dynamics can be described by the system of ODEs
\begin{equation}
\mP_j:\sum_{k=0}^{n_{P_j}} a_{jk} y_j^{(k)}(t)-\sum_{i=1}^{m}\left(\sum_{q=0}^{m_{P_{ji}}}b_{jiq} u_i^{(q)}(t)\right)=0,
\label{eq:mimo_single_output}
\end{equation}
where $u_i(t)$, $i=1,\hdots, m$, are the input variables, whereas $u_i^{(q)}$ denotes their derivatives of order $q$, with $q=0,\hdots, m_{P_{ji}}$. Moroever, $y_j(t)$, $j=1,\hdots, m$, is the measurable output with derivatives $y_j^{(k)}$, $k=0,\hdots, n_{P_j}$, while $a_{jk},\,b_{jiq}$ are scalar plant coefficients.
Then, making reference to \cite{Formentin2019}, the following assumption holds:
\begin{assumptions}
\item $u(t)$ belongs to the same $\mathcal{H}^p$ as $d(t)$ in \Ass{\ref{ass:d}}, such that $y(t)$, and all other signals that will be derived from it, are in $\mathcal{L}_2$. \label{ass:vrft_uy}
\end{assumptions}
Taking into account the error in \eqref{eq:e}, 
under the assumption \Ass{\ref{ass:vrft_uy}}, 
the dynamics of the controller is described by the matrix of transfer functions $\Rdd(s)$ (corresponding to the ideal controller defined in Section \ref{sec:ddismc}), with the state-space representation given by the tuple $\left(\Ardd,\Brdd,\Crdd,\Drdd\right)$:
\begin{equation}
\begin{aligned}
\Rdd(s)\!=\!&
{\setlength{\arraycolsep}{1pt}\begin{bmatrix} \hat{R}_{0,11}(s)  & \cdots & \hat{R}_{0,1m}(s) \\ \vdots & \ddots & \vdots \\ \hat{R}_{0,m1}(s) & \cdots & \hat{R}_{0,mm}(s) \end{bmatrix}\!\!\! = \!\!
\frac{1}{\chi(s)}\begin{bmatrix} \hat{N}_{11}(s)  & \cdots & \hat{N}_{1m}(s) \\ \vdots & \ddots & \vdots \\ \hat{N}_{m1}(s) & \cdots & \hat{N}_{mm}(s) \end{bmatrix}}\!,
\label{eq:mimo_R_tf}
\end{aligned}
\end{equation}
where $\chi(s)=\sum^{n_{R}}_{k=0} \alpha_{k}s^k$ is the least common denominator of all $\hat{R}_{0,ij}(s)$, and $\hat{N}_{ij}(s)$ are the numerators of  $\hat{R}_{0,ij}(s)$, which can be expressed as
$
\hat{N}_{ij}(s) = \sum_{q=0}^{m_{R_{ij}}}\beta_{ijq} s^q
\label{eq:mimo_R_N}
$.
Analogously to the process model \eqref{eq:mimo_single_output}, the regulator in \eqref{eq:mimo_R_tf} can be described by the following system of ODEs
\begin{equation}
\hat{\mathcal{R}}_{0,i}(\theta):\sum_{k=0}^{n_{R}} \alpha_k \hat{u}_{0,i}^{(k)}(t)-\sum_{j=1}^{m}\left(\sum_{q=0}^{m_{R_{ij}}}\beta_{ijq} e_j^{(q)}(t)\right)=0,
\label{eq:mimo_R_ode}
\end{equation}
where we assume that the parameters $\alpha_k$ are fixed, while $\beta_{ijq}$ are tunable and contained in the vector $\theta$.
As noted in \cite{Formentin2019}, the fact that the parameters $\alpha_k$ are fixed allows us to obtain a convex formulation of the design problem. Notice also that the choice of $m_{R_{ij}}$ and
 $n_{R}$ can be either free, such that $m_{R_{ij}}\leq n_R$, or constrained by the application, and moreover they are not related to $m_{P_{ji}}$ and $n_{P_j}$ (which are unknown).
 
After a suitable manipulation of \eqref{eq:mimo_R_tf}, the transfer function matrix of the controller can be written as
\begin{equation}\label{eq:R0_vrft_fr}
\Rdd(s,\theta)= \sum^{\overline{m}_R}_{q=0} \Pi_q(\theta) \frac{s^q}{\chi(s)},
\end{equation}
where $\overline{m}_R = \max_{ij}(m_{R_{ij}})$, and $\Pi_q(\theta) \in \R^{m\times m}$, containing parameters $\beta_{ijq}$ stacked in a suitable way. 
Hence, the vector of tunable parameters $\theta$ can be  expressed as
$\theta=\begin{bmatrix} \vecto(\Pi_{0})'  \cdots \vecto(\Pi_{\overline{m}_R})' \end{bmatrix}'.$

Now, the problem of designing a controller so that the closed-loop behavior is as close as possible to the behavior of a reference model $\M$, with transfer function $M(s)$, is known as the model-reference control problem, which is here formally stated.
\begin{resp}
\begin{problem}\label{pb:model-ref}
Design a controller of the form \eqref{eq:mimo_R_ode} that minimizes the model-reference cost function
\begin{equation}
\resizebox{1.0\hsize}{!}{$
J_{\text{mr}}(\theta)= \Big\lVert W_{\text{mr}}(s)\left(M(s)-\left(I+P(s)\Rdd(s,\theta) \right)^{-1}P(s)\Rdd(s,\theta)\right)\Big\rVert_2^2 $}.
\label{eq_mimo_mr_obj}
\end{equation}
\end{problem}
\end{resp}
The cost function $J_\mathrm{mr}$ in \eqref{eq_mimo_mr_obj} penalizes the difference, weighted by $W_\mathrm{mr}(s)$, between the frequency response of the closed-loop control system, parametrized by $\theta$, and that of the reference model. We define the solution to Problem \ref{pb:model-ref} as the \emph{model-reference} controller $\hat{\mathcal{R}}_\mathrm{mr}$, based on the optimal parametrization $\theta_\mathrm{mr}=\arg \min_{\theta} J_\mathrm{mr}(\theta)$. Note that, in general, $J_\mathrm{mr}(\theta_\mathrm{mr})\neq 0$, meaning that the achieved controller $\hat{\mathcal{R}}_\mathrm{mr}\neq \mRid$, with $\mRid$ the ideal controller with transfer function $\Rid(s)$ defined in Section \ref{sec:nom_dyn}. On the other hand, when $\Rid(s)$ belongs to the class of chosen controllers in \eqref{eq:R0_vrft_fr}, we define $\theta_0$ as the corresponding set of parameters, and perfect model-reference matching is achieved, that is $\theta_\mathrm{mr}=\theta_0$.

In the case of unknown plant $P(s)$, Problem \ref{pb:model-ref} cannot be directly solved, and the VRFT approach leverages data to build an approximate solution, without estimating a model of the plant. In the following, we refer to $u(t)$ and $y(t)$ as signals recorded and stored from plant experiments. 
\\

\begin{remark}[VRFT applicability]
It is worth noticing that the VRFT approach can be applied both in the case of stable and unstable plants. As for the case of unstable plant, the identification experiments must be performed in closed-loop with a preliminary stabilizing controller \cite{Campi2002}.\Endofremark
\end{remark}

\begin{figure}[!ht]
        \centering
        \includegraphics[width=0.8\columnwidth]{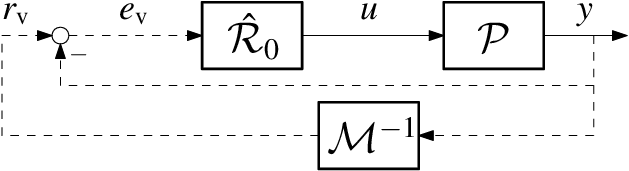}
        \caption{Generation of signals for the VRFT algorithm. Solid lines correspond to the collected data $u(t)$ and $y(t)$, being the input and output of the plant $\mP$, respectively. Dashed lines correspond instead to the virtual reference $r_\mathrm{v}(t)$ and the virtual error $e_\mathrm{v}(t)$ required by the VRFT method in order to compute the data-driven regulator $\mRdd$.}
        \label{fig:vrft}
\end{figure}
Firstly, with reference to the VRFT control scheme in Fig. \ref{fig:vrft}, define the \emph{virtual reference} $r_\mathrm{v}(t)$ as
$
r_\mathrm{v}(t) \coloneq M(s)^{-1}y(t),
$
which can be computed offline, based on the whole recorded trace of $y(t)$. Then, the \emph{virtual error} is defined as $e_\mathrm{v}(t)\coloneq r_\mathrm{v}(t)-y(t)$.
Now, exploiting \eqref{eq:R0_vrft_fr} and recalling that $\hat{u}_{0}(t) = \Rdd(s,\theta) e(t)$, the \emph{virtual input signal} $\hat{u}_{0,\mathrm{v}}(t,\theta)$ can be expressed as:
\begin{equation}\label{eq:virtual_u0}
\begin{aligned}
\hat{u}_{0,\mathrm{v}}(t,\theta) = &  \Rdd(s,\theta) e_\mathrm{v}(t) \\
= &\sum^{\overline{m}_R}_{q=0} \Pi_q(\theta)\cdot \left(\frac{s^q}{\chi(s)} e_\mathrm{v}(t)\right) 
= 
\sum^{\overline{m}_R}_{q=0} \Pi_q(\theta)\cdot \varphi_{q,\mathrm{v}}(t),
\end{aligned}
\end{equation}
where $\varphi_{q,\mathrm{v}}(t)=\frac{s^q}{\chi(s)} e_\mathrm{v}(t) \in \R^m$.
Making reference to \cite{Formentin2012}, we define matrix $\varPhi_\mathrm{v}(t)$ as
\begin{equation}
\varPhi_\mathrm{v}(t) = \begin{bmatrix} \varphi_{0,\mathrm{v}}(t)' \otimes I \cdots \varphi_{\overline{m}_R,\mathrm{v}}(t)' \otimes I \end{bmatrix}
\end{equation}
such that 
\begin{equation}\label{eq:u0_v}
\hat{u}_{0,\mathrm{v}}(t,\theta) = \varPhi_\mathrm{v}(t)\theta.
\end{equation}

Finally, the VRFT paradigm recasts Problem \ref{pb:model-ref} into the identification problem of the controller that maps the virtual error $e_\mathrm{v}(t)$ into the stored control action $u(t)$, that is minimizing
\begin{equation}
J_{\text{vr}}(\theta)=\lVert u(t)-\hat{u}_{0,\mathrm{v}}(t,\theta) \rVert_2^2,
\label{eq_mimo_vrft_obj}
\end{equation}
with $\hat{u}_{0,\mathrm{v}}(t,\theta)$ defined in \eqref{eq:u0_v}.
Exploiting \eqref{eq:u0_v}, the objective can be expressed as
\begin{equation}
\begin{aligned}
J_{\text{vr}}(\theta)&=\lVert u(t)-\varPhi_\mathrm{v}(t)\theta \rVert_2^2 \\
&= \int^{\infty}_{-\infty} \Tr\left( \left(u(\tau)-\varPhi_\mathrm{v}(\tau)\theta\right)' \left( u(\tau)-\varPhi_\mathrm{v}(\tau)\theta \right) \right) \text{d}\tau.
\label{eq_mimo_vrft_obj_2}
\end{aligned}
\end{equation}
By suitable manipulation and taking the gradient with respect to $\theta$, the minimizer $\theta_\mathrm{vr}$ of \eqref{eq_mimo_vrft_obj_2} is obtained by solving the following linear system
\begin{equation}
\left(\int^{\infty}_{-\infty} \varPhi_\mathrm{v}(\tau)' \varPhi_\mathrm{v}(\tau) \text{d}\tau \right) \theta_{\text{vr}}= \left(\int^{\infty}_{-\infty} \varPhi_\mathrm{v}(\tau)' u(\tau) \text{d}\tau \right).
\label{eq_mimo_vrft_minimizer}
\end{equation}
The data-driven controller $\hat{R}_\mathrm{vr}(s)$ is given by substituting $\theta_\mathrm{vr}$ into \eqref{eq:R0_vrft_fr}. We note that such a controller corresponds to the data-driven controller $\Rdd(s)$ defined in Section \ref{sec:mr_control}.

\section*{Acknowledgment}
The authors are indebted to Andrea Bisoffi for several enlightening discussions.

\bibliographystyle{plain}       
\bibliography{autosam}
\balance
\end{document}